\title{ARRIVAL: Recursive Framework \texorpdfstring{\&}{and} \texorpdfstring{$\ell_1$}{l1}-Contraction} 
\author{Sebastian Haslebacher}{ETH Zurich, Switzerland}{sebastian.haslebacher@inf.ethz.ch}{https://orcid.org/0000-0003-3988-3325}{}
\authorrunning{S. Haslebacher} 
\keywords{ARRIVAL, G-ARRIVAL, Deterministic Random Walk, Rotor-Routing, \texorpdfstring{$\ell_1$}{l1}-Contraction, Banach Fixed Point.} 
\newcommand{\NP}{\mathsf{NP}}
\newcommand{\CoNP}{\mathsf{CoNP}}
\newcommand{\UP}{\mathsf{UP}}
\newcommand{\CoUP}{\mathsf{CoUP}}
\renewcommand{\P}{\mathsf{P}}
\newcommand{\UEOPL}{\mathsf{UEOPL}}
\newcommand{\TFNP}{\mathsf{TFNP}}
\newcommand{\N}{\mathbb{N}}
\newcommand{\R}{\mathbb{R}}
\newcommand{\bigO}{\mathcal{O}}
\DeclareMathOperator{\poly}{poly}
\DeclarePairedDelimiter\norm{\lVert}{\rVert}
\let\emptyset\varnothing
\renewcommand{\epsilon}{\ensuremath\varepsilon}
\renewcommand{\phi}{\ensuremath{\varphi}}
\newcommand{\problem}[1]{\textrm{#1}}
\newcommand{\arrival}{\problem{ARRIVAL}}
\newcommand{\ssg}{\problem{SSG}}
\newcommand{\Tarski}{\problem{Tarski}}
\newcommand{\garrival}{\problem{G-ARRIVAL}}
\begin{document}

\maketitle

\begin{abstract}
\arrival\ is the problem of deciding which out of two possible destinations will be reached first by a token that moves deterministically along the edges of a directed graph, according to so-called switching rules. It is known to lie in $\NP \cap \CoNP$, but not known to lie in $\P$. The state-of-the-art algorithm due to Gärtner et al.\@~(ICALP `21) runs in time $2^{\bigO(\sqrt{n} \log n)}$ on an $n$-vertex graph.

We prove that \arrival\ can be solved in time $2^{\bigO(k \log^2 n)}$ on $n$-vertex graphs of treewidth $k$. Our algorithm is derived by adapting a simple recursive algorithm for a generalization of \arrival\ called \garrival. This simple recursive algorithm acts as a framework from which we can also rederive the subexponential upper bound of Gärtner et al.

Our second result is a reduction from \garrival\ to the problem of finding an approximate fixed point of an $\ell_1$-contracting function $f : [0, 1]^n \rightarrow [0, 1]^n$. Finding such fixed points is a well-studied problem in the case of the $\ell_2$-metric and the $\ell_\infty$-metric, but little is known about the $\ell_1$-case. 

Both of our results highlight parallels between \arrival\ and the Simple Stochastic Games (\ssg) problem. Concretely, Chatterjee et al.\@~(SODA `23) gave an algorithm for \ssg\ parameterized by treewidth that achieves a similar bound as we do for \arrival, and \ssg\ is known to reduce to $\ell_\infty$-contraction.

\end{abstract}

\section{Introduction}
\label{sec:introduction}

\arrival\ is a computational problem first introduced by Dohrau et al.\@~\cite{dohrauARRIVALZeroPlayerGraph2017}. It can be described as a deterministic process (or zero-player game) on a directed graph with a designated origin $o$ and two designated destinations $d$ and $\overline{d}$. Every vertex in \arrival\ has out-degree two, and exactly one outgoing edge at every vertex is marked (we also call it the \emph{even} edge). We additionally assume that both destinations are reachable from every vertex in the graph. A token is placed on $o$ and moved along the edges of the graph according to the following rule: At every vertex, the token continues along the outgoing edge that was used least so far. In case of a tie, the token uses the even edge. This effectively means that the token will alternate between the two outgoing edges at every vertex, starting with the even edge. The task is to decide which of the two destinations $d$ or $\overline{d}$ will be visited first by the token.

Dohrau et al.\@~\cite{dohrauARRIVALZeroPlayerGraph2017} proved that \arrival\ is contained in $\NP \cap \CoNP$. Naturally, they then asked whether it is also in $\P$. This open problem has received some attention in recent years and the best algorithm to date, due to Gärtner et al.\@~\cite{gartnerSubexponentialAlgorithmARRIVAL2021}, runs in time $2^{\bigO(\sqrt{n} \log n)}$ on a graph with $n$ vertices.

We present two new results for \arrival: Our first result is an algorithm for \arrival\ that runs in time $2^{\bigO \left(k \log^2 n \right)}$ on graphs with $n$ vertices and treewidth $k$. Note that this bound is quasi-polynomial for graphs with bounded treewidth. Our algorithm is obtained by adapting a simple recursive algorithm for \garrival, a generalization of \arrival\ that allows arbitrarily many origins, destinations, and tokens: The recursive algorithm solves an instance with $\ell$ destinations and origins by making recursive calls on instances with $\ell + 1$ destinations and origins. In other words, in each recursive call, a new vertex is made into a destination and origin vertex. By choosing this pivot vertex carefully, we can exploit the underlying graph structure.

It turns out that this simple recursive algorithm for \garrival\ can also be seen as a framework for other algorithms for \garrival. Concretely, we explain how the state-of-the-art upper bound $2^{\bigO(\sqrt{n}\log n)}$ due to Gärtner et al.\@~\cite{gartnerSubexponentialAlgorithmARRIVAL2021} can be derived in this framework as well.

Our second result is a reduction from \garrival\ to the problem of finding an approximate fixed point of a $\ell_1$-contracting function $f : [0 ,1]^n \rightarrow [0 ,1]^n$. Concretely, we say that a function $f : [0 ,1]^n \rightarrow [0 ,1]^n$ is contracting with parameter $\lambda \in [0, 1)$ if we have $\norm{f(x) - f(y)}_1 \leq \lambda \norm{x - y}_1$ for all $x, y \in [0, 1]^n$. Such a function is guaranteed to have a unique fixed point by Banach's fixed point theorem~\cite{banach1922operations}. An $\epsilon$-approximate fixed point $x \in [0, 1]^n$ has to satisfy $\norm{f(x) - x}_1 \leq \epsilon$. Given our reduction, any algorithm that can find an $\epsilon$-approximate fixed point of $f$ in time $\poly \left(\log \frac{1}{\epsilon}, \log \frac{1}{1 - \lambda}, n \right)$ would imply a polynomial-time algorithm for \garrival.

Finding approximate fixed points is a well-studied problem in the case of the $\ell_2$-metric and the $\ell_\infty$-metric (see e.g.\@~\cite{sikorskiComputationalComplexityFixed2009, chenComputingFixedPoint2024}). In particular, efficient algorithms in the case of the $\ell_2$-metric have been known since 1993~\cite{sikorskiEllipsoidAlgorithmComputation1993}, and Chen et al.\@~\cite{chenComputingFixedPoint2024} only recently gave the first polynomial query upper bound for the $\ell_\infty$-metric. Even more recently, a generalization of the result by Chen et al.\ to $\ell_p$-contractions for every $p \in [1, \infty]$ was announced~\cite{haslebacherQueryEfficientFixpoints$ell_p$Contractions2025}. Concretely, this means that an approximate fixed point of an $\ell_1$-contraction can be found with polynomially many queries to the contraction map. 

Unfortunately, the algorithm for $\ell_1$-contractions in~\cite{haslebacherQueryEfficientFixpoints$ell_p$Contractions2025} is only query-efficient (and not time-efficient). Thus, our reduction currently does not imply better algorithms for \garrival. We still think that our reduction is interesting: To the best of our knowledge, it provides the first concrete application for the $\ell_1$-contraction problem. Moreover, the polynomial query upper bound for $\ell_1$-contractions~\cite{haslebacherQueryEfficientFixpoints$ell_p$Contractions2025} suggests that better (maybe even time-efficient) algorithms could be obtained for $\ell_1$-contraction and hence \garrival\ in the future. For this, it might also be interesting to observe that the $\ell_1$-contraction map obtained through our reduction from \garrival\ has the additional property of being monotone with respect to the coordinate-wise partial order. Given recent work of Batziou et al.\@~\cite{batziouMonotoneContractions2024} on monotone $\ell_\infty$-contractions, it seems plausible that $\ell_1$-contraction and monotonicity could maybe be exploited simultaneously as well.

\subsection{Related Work}

Since \arrival\ is contained in $\NP \cap \CoNP$, it also naturally fits into the complexity class $\TFNP$, which contains total problems with efficiently verifiable solutions. In fact, after a series of results for containment in subclasses of $\TFNP$~\cite{karthikc.s.DidTrainReach2017, gartnerARRIVALNextStop2018}, we now know that \arrival\ is contained in $\UEOPL$~\cite{fearnleyUniqueEndPotential2020}.

Going into a slightly different direction, Gärtner et al.\@~\cite{gartnerARRIVALNextStop2018} proved that \arrival\ is also contained in $\UP$ and $\CoUP$, the analogues of $\NP$ and $\CoNP$ with unique solutions. 
In fact, it would not be hard to rederive this using our reduction to $\ell_1$-contraction: The idea is that the fixed point acts as an efficient certificate for both YES- and NO-instances and it must be unique due to the contraction property.

It is also known that \arrival\ can be solved in polynomial time on some restricted graph classes. For example, a result due to Priezzhev et al.\@~\cite{priezzhevEulerianWalkersModel1996} implies that \arrival\ can be solved by simulation in polynomial time on Eulerian graphs. Other results include polynomial-time algorithms on tree-like multigraphs~\cite{augerPolynomialTimeAlgorithm2022} and path-like multigraphs with many tokens~\cite{augerGeneralizedARRIVALProblem2023}. More recently, algorithms running in quasi-polynomial-time were given for the case of tree-like multigraphs with many tokens~\cite{ghorbaniQuasiPolynomialTimeAlgorithm2025}. Our algorithm effectively generalizes this result to all graphs of bounded treewidth.

Finally, further variants of \arrival\ have been studied in the past, including a stochastic variant~\cite{websterStochasticArrivalProblem2022}, a recursive variant~\cite{websterRecursiveArrivalProblem2023}, as well as variants with one or two players~\cite{fearnleyReachabilitySwitchingGames2021}.

\paragraph*{Comparison to \ssg\ }

As mentioned before, both our results show parallels between \arrival\ and Simple Stochastic Games (\ssg). We will briefly discuss the connections between the two problems. 
Similarly to \arrival, \ssg\ is contained in $\NP \cap \CoNP$~\cite{condonComplexityStochasticGames1992} and even $\UP \cap \CoUP$~\cite{chatterjeeReductionParityGames2011a}, but no polynomial-time algorithm is known. The state-of-the-art algorithm due to Ludwig~\cite{ludwigSubexponentialRandomizedAlgorithm1995} runs in randomized subexponential time $2^{\bigO(\sqrt{n}\log n)}$.

Dohrau et al.\@~\cite{dohrauARRIVALZeroPlayerGraph2017} already wondered about similarities between \arrival\ and \ssg\ when they first introduced \arrival. Since then, both problems were shown to reduce to the problem of finding a Tarski fixed point~\cite{etessamiTarskiTheoremSupermodular2020, gartnerSubexponentialAlgorithmARRIVAL2021}, and to be contained in the complexity class $\UEOPL$~\cite{fearnleyUniqueEndPotential2020}. Both our results for \arrival\ further extend this list of similarities: Our upper bound of $2^{\bigO(k \log^2 n)}$ for \arrival\ on $n$-vertex graphs of treewidth $k$ is comparable to a similar bound for \ssg\ due to Chatterjee et al.\@~\cite{chatterjeeFasterAlgorithmTurnbased2023}. Moreover, \ssg\ reduces to finding an approximate fixed point of an $\ell_\infty$-contracting function~\cite{condonComplexityStochasticGames1992}, which is analogous to our reduction from \arrival\ to $\ell_1$-contraction. 

Both \arrival\ and \ssg\ also admit polynomial-time algorithms on graphs with a bounded feedback vertex set~\cite{gartnerSubexponentialAlgorithmARRIVAL2021, augerFindingOptimalStrategies2014}. In fact, in the case of \arrival, we will discuss this in more detail in Section~\ref{ssec:subexponential_algo}.

\subsection{Outline}

As explained above, our results are actually obtained for a generalization of \arrival\ called \garrival. Thus, we will start Section~\ref{sec:preliminaries} by formally introducing \garrival. We also use Section~\ref{sec:preliminaries} to recall further terminology and notation from the literature that will be useful for our arguments. 

Note that instead of formally introducing the notion of treewidth and tree decompositions, we will directly work with so-called balanced separators instead. The reason for this choice of exposition is that our parameterized algorithm actually exploits the existence of small balanced separators (and not tree decompositions themselves), which are guaranteed to exist in graphs of small treewidth (see Section~\ref{ssec:treewidth_separators} for more details). 

We describe our parameterized algorithm in Section~\ref{sec:algorithm}. We start the exposition with our simple recursive algorithm for \garrival\ (Section~\ref{ssec:recursive_algorithm}), which provides a framework from which we will derive the parameterized algorithm in Section~\ref{ssec:exploiting_balanced_separators}. In Section~\ref{ssec:subexponential_algo}, we additionally explain how the subexponential upper bound due to Gärtner et al.\@~\cite{gartnerSubexponentialAlgorithmARRIVAL2021} and polynomial-time upper bounds on graphs with a bounded feedback vertex set~\cite{gartnerSubexponentialAlgorithmARRIVAL2021} can be rederived from our framework.

Finally, Section~\ref{sec:solving_arrival_via_contraction} contains our reduction to the problem of finding an approximate fixed point of a $\ell_1$-contraction map. Crucially, given an instance of \garrival, we define a function $f : \R^n_{\geq 0} \rightarrow \R^n_{\geq 0}$ that is contracting and thus has a unique fixed point. We then prove that a reasonably good approximation of this fixed point will give away the solution to the \garrival-instance. The function can be restricted to a compact subset of $\R^n_{\geq 0}$ and scaled to fit into $[0, 1]^n$, if desired.

\section{Preliminaries}
\label{sec:preliminaries}

We start by recalling \garrival, which was first formally defined by Hoang~\cite{hoangTwoCombinatorialReconfiguration2022}. Note that our formulation slightly deviates from the one by Hoang, but is easily seen to be equivalent.

A switch graph is a directed graph $G = (V, E, s_0, s_1)$ with $s_0, s_1 : V \rightarrow V$ and $E = \{ (v, s_0(v)) \mid v \in V \} \cup \{ (v, s_1(v)) \mid v \in V \}$.
We consider $E$ to be a multiset, and two edges $(v, s_0(v)), (v, s_1(v)) \in E$ to be distinct objects even if we have $s_0(v) = s_1(v)$. Given a switch graph, we call $s_0(v)$ and $s_1(v)$ the even and odd successor of $v \in V$, respectively. Similarly, we refer to edges induced by $s_0$ as even edges, and to edges induced by $s_1$ as odd edges. 

In \garrival, many tokens traverse the directed graph simultaneously, starting and ending at special vertices that we call terminals or terminal vertices. Concretely, a \garrival-instance consists of a switch graph as well as a non-empty subset $T \subseteq V$ of terminals. For each terminal $v \in T$, we also get a natural number $t^+_v$ of tokens starting at $v$. We always assume that at least one terminal is reachable from each non-terminal vertex $v \in V \setminus T$ (otherwise, tokens could loop indefinitely without ever reaching a terminal). We will also always denote the total number of tokens by $t^+ \coloneqq \sum_{v \in T} t^+_v$ and assume $t^+ \geq 1$.

\begin{figure}[ht]
    \centering
    \includegraphics[width=0.6\linewidth]{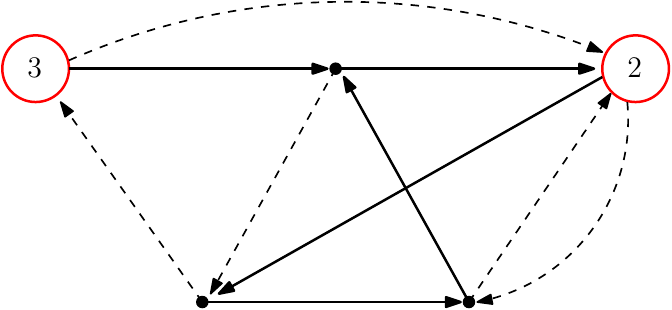}
    \caption{An instance of \garrival. Even edges are bold while odd edges are dashed. Terminals are marked in red. Three tokens start at the left terminal, and two tokens start at the right terminal. }
    \label{fig:g-arrival}
\end{figure}

Now consider the following non-deterministic procedure. Initially, for every terminal $v \in T$, move $\lceil \frac{t^+_v}{2} \rceil$ tokens from $v$ to $s_0(v)$, and $\lfloor \frac{t^+_v}{2} \rfloor$ tokens from $v$ to $s_1(v)$ (we do this for every terminal simultaneously). Then, while there exists a token on a non-terminal $v \in V \setminus T$, non-deterministically choose one such token and move it along the out-edge of $v$ that was used fewer times so far. In case of a tie, the token must use the even out-edge $(v, s_0(v)) \in E$. In other words, the even and odd out-edges at $v$ will be used in an alternating fashion, starting with the even out-edge. This procedure stops once all tokens have reached a terminal. The goal of \garrival\ is to predict the number of tokens $t^-_v$ arriving at each terminal $v \in T$. 

In order for \garrival\ to be well-defined, we need to make sure that the above procedure terminates and that it always produces the same values $t^-_v$ for all $v \in T$ (independently of the non-deterministic choices). Indeed, both of these properties hold, and this follows by generalizing the arguments of Dohrau et al.\@~\cite{dohrauARRIVALZeroPlayerGraph2017} to work with multiple tokens, multiple destinations, and multiple origins, as explained by Gärtner et al.\@~\cite{gartnerSubexponentialAlgorithmARRIVAL2021} and Hoang~\cite{hoangTwoCombinatorialReconfiguration2022}. To explain this, we first need to recall the concept of switchings flows from the literature. 

Given a switch graph $G = (V, E, s_0, s_1)$ with terminals $T \subseteq V$ and starting tokens $(t^+_v)_{v \in T}$, a function $x : E \rightarrow \R_{\geq 0}$ satisfying the three constraints
\begin{align*}
    x(v, s_0(v)) - x(v, s_1(v)) &\in \{0, 1\} &(\forall v \in V)\\
    \underbrace{\sum_{u : (v, u) \in E} x(v, u)}_{ =: x^+(v) } - \underbrace{\sum_{u : (u, v) \in E} x(u, v)}_{:= x^-(v)} &= 0 &(\forall v \in V \setminus T) \\
    \underbrace{\sum_{u : (v, u) \in E} x(v, u)}_{ =: x^+(v) } &= t^+_v &(\forall v \in T) \\
\end{align*}
is called a switching flow. We will refer to the first set of constraints above as switching behavior, and to the second set of constraints as flow conservation. 

Dohrau et al.\@~\cite{dohrauARRIVALZeroPlayerGraph2017} proved the following theorem in the case of \arrival, but we directly state its generalization for \garrival\ (see also~\cite{gartnerSubexponentialAlgorithmARRIVAL2021, hoangTwoCombinatorialReconfiguration2022}).

\begin{theorem}[Integral Switching Flows are Certificates~\cite{dohrauARRIVALZeroPlayerGraph2017}]
\label{theorem:switching_flows_are_certificates}
    Given a switch graph $G = (V, E, s_0, s_1)$ with terminals $\emptyset \neq T \subseteq V$ and starting tokens $(t^+_v)_{v \in T}$ with $t^+ \geq 1$, the number of tokens $t^-_v$ arriving at terminal $v$ is well-defined and any integral switching flow $x : E \rightarrow \N_0$ satisfies $x^-(v) = t^-_v$, for all $v \in T$.
\end{theorem}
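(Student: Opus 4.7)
The plan is to generalize the argument of Dohrau et al.\@~\cite{dohrauARRIVALZeroPlayerGraph2017} by establishing a \emph{domination lemma} that simultaneously implies termination and uniqueness of the outcome. Concretely, fix any integral switching flow $x$ (if one exists) and run the non-deterministic procedure, tracking the edge-usage vector $y_t$ after $t$ steps. I would prove by induction on $t$ that $y_t(e) \leq x(e)$ for every edge $e \in E$. The inductive step considers a token moving out of a non-terminal $v$ along the less-used edge: the switching-behavior constraint $x(v, s_0(v)) - x(v, s_1(v)) \in \{0, 1\}$ together with flow conservation at $v$ and an accounting of how many tokens have ever visited $v$ shows that this move keeps $y_{t+1} \leq x$. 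Since $\sum_e x(e)$ is finite, this immediately yields termination.

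Next I would show that the edge-usage vector $y$ at termination is itself an integral switching flow. Switching behavior at any non-terminal follows because the two outgoing edges are used in strict alternation starting with the even edge, giving $y(v, s_0(v)) - y(v, s_1(v)) \in \{0, 1\}$; flow conservation at non-terminals holds because every token that ever entered eventually left; and the boundary condition $y^+(v) = t^+_v$ at each terminal $v \in T$ is enforced by the initial step of the procedure, in which $\lceil t^+_v / 2 \rceil + \lfloor t^+_v / 2 \rfloor = t^+_v$ tokens leave $v$ and no token ever re-enters to be pushed out again. Combined with the domination lemma applied symmetrically to any two integral switching flows $x, y$, one concludes $x = y$, so all integral switching flows coincide and agree on the values $t^-_v = x^-(v)$ at terminals. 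Existence of at least one integral switching flow is then obtained a posteriori from the termination profile of any concrete execution started, e.g., by always picking a token lexicographically.

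The main obstacle is calibrating the inductive invariant for the domination lemma so that the generalizations over \arrival\ are handled uniformly. Three new features must be absorbed: multiple tokens may stack at the same vertex (so the ``less-used'' rule has to be interpreted per move rather than per token), terminals both emit and absorb tokens (so the invariant at a terminal must track emitted vs.\@ absorbed counts against the constraint $x^+(v) = t^+_v$), and the very first step moves a batch of $t^+_v$ tokens simultaneously out of each terminal with a fixed $\lceil \cdot \rceil / \lfloor \cdot \rfloor$ split that must be shown consistent with the switching-behavior constraint at terminals. The right invariant states that after $t$ steps, $y_t(e) \leq x(e)$ everywhere and moreover at every non-terminal $v$ the discrepancy $y_t(v, s_0(v)) - y_t(v, s_1(v))$ lies in $\{0, 1\}$ with the ``$1$'' case occurring precisely when a token is currently on $v$ or has just arrived. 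Once this invariant is set up, the inductive step is a short case analysis, and the rest of the argument follows the template of Dohrau et al.\@~essentially verbatim, as noted by Gärtner et al.\@~\cite{gartnerSubexponentialAlgorithmARRIVAL2021} and Hoang~\cite{hoangTwoCombinatorialReconfiguration2022}.
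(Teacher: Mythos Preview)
The paper does not prove this theorem; it is quoted from \cite{dohrauARRIVALZeroPlayerGraph2017} (with the \garrival\ generalization attributed to \cite{gartnerSubexponentialAlgorithmARRIVAL2021,hoangTwoCombinatorialReconfiguration2022}), so there is no in-paper proof to compare against. I will therefore only evaluate the soundness of your plan.

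Your central step is wrong. You write that the domination lemma ``applied symmetrically to any two integral switching flows $x, y$'' yields $x = y$, i.e.\ that integral switching flows are unique. They are not: the paper states this explicitly right after the theorem and illustrates it in Figure~\ref{fig:switching_flow}, where the depicted switching flow differs from the run profile. The domination lemma only says that the \emph{run profile} $r$ of the non-deterministic procedure is coordinate-wise below any integral switching flow $x$; an arbitrary switching flow need not be realizable as a run, so you cannot reverse the roles of $x$ and $r$. Consequently your proof, as written, does not establish $x^-(v) = t^-_v$ for all terminals.

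What actually closes the argument is a global flow balance on the difference $d \coloneqq x - r \geq 0$. Both $x$ and $r$ satisfy flow conservation at every non-terminal and have identical outflow $t^+_v$ at every terminal, so $d$ has flow conservation on $V \setminus T$ and $d^+(v) = 0$ for $v \in T$. Summing $d^+ = d^-$ over all of $V$ and cancelling the non-terminal contributions gives $\sum_{v \in T} d^-(v) = \sum_{v \in T} d^+(v) = 0$; since $d \geq 0$, each $d^-(v) = 0$, hence $x^-(v) = r^-(v) = t^-_v$. This is the missing idea.

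There is also a circularity you should address: your termination argument assumes an integral switching flow exists, yet you produce one only ``a posteriori'' from a terminating execution. You need an independent termination proof (e.g.\ the standard argument that infinitely many traversals of some edge would, via switching behavior and the reachability assumption, force infinitely many arrivals at a terminal), after which the run profile exists and your domination lemma can be applied against it.
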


Concretely, Theorem~\ref{theorem:switching_flows_are_certificates} states that \garrival\ is well-defined and that it can be solved by finding any integral switching flow. Observe that by simulating the non-deterministic procedure outlined before and recording the number of times that each edge is traversed by a token, one can obtain a special integral switching flow that we call the run profile. While the run profile is unique (i.e.\ it does not depend on non-deterministic choices)~\cite{gartnerSubexponentialAlgorithmARRIVAL2021}, Dohrau et al.\@~\cite{dohrauARRIVALZeroPlayerGraph2017} already observed that in general, integral switching flows are not unique. Concretely, there may be integral switching flows other than the run profile, but Theorem~\ref{theorem:switching_flows_are_certificates} tells us that they still predict the values $(t^-_v)_{v \in T}$ correctly. 

\begin{figure}[ht]
    \centering
    \includegraphics[width=0.6\linewidth]{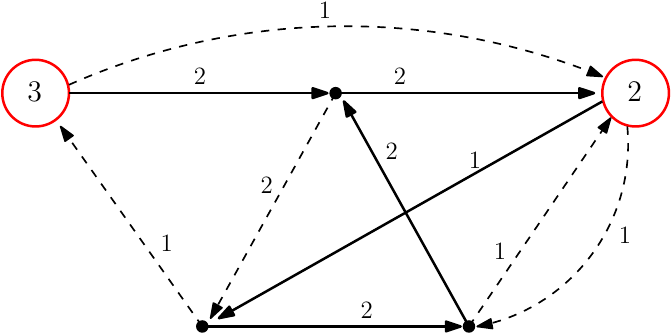}
    \caption{The number on the edges indicate an integral switching flow. Note that this is not the run profile, but it still certifies (by Theorem~\ref{theorem:switching_flows_are_certificates}) that four tokens arrive at the right terminal while only one token arrives at the left terminal. To get the run profile, one would have to decrease the flow on the edges of the directed triangle formed by the three non-terminal vertices by one each.}
    \label{fig:switching_flow}
\end{figure}

Finally, we will need an upper bound on the total flow in any integral switching flow. Similar upper bounds were used in previous work as well~(see e.g.\ \cite{dohrauARRIVALZeroPlayerGraph2017, gartnerARRIVALNextStop2018, gartnerSubexponentialAlgorithmARRIVAL2021, hoangTwoCombinatorialReconfiguration2022}). We sketch a short proof.

\begin{lemma}[Upper Bound on Integral Switching Flow~\cite{dohrauARRIVALZeroPlayerGraph2017}]
\label{lemma:upper_bound_flow}
    Let $x : E \rightarrow \N_0$ be an arbitrary integral switching flow for the \garrival-instance on a switch graph $G = (V, E, s_0, s_1)$ with terminals $\emptyset \neq T \subseteq V$ and starting tokens $(t^+_v)_{v \in T}$ with $t^+ \geq 1$. Then we must have $t^+ = \sum_{v \in T} t^+_v = \sum_{v \in T} t^-_v = \sum_{v \in T} x^-(v)$ and $x(e) < 2^{|V|}t^+$ for all $e \in E$.
\end{lemma}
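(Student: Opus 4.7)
For the first equality, the plan is to use global flow conservation: each edge $e = (u, v) \in E$ contributes $x(e)$ exactly once to $\sum_{v \in V} x^+(v)$ and once to $\sum_{v \in V} x^-(v)$, so these two sums are equal. Subtracting the non-terminal conservation $x^+(v) = x^-(v)$ from both sides leaves $\sum_{v \in T} x^+(v) = \sum_{v \in T} x^-(v)$; substituting $x^+(v) = t^+_v$ at terminals (a defining constraint of the switching flow) and invoking Theorem~\ref{theorem:switching_flows_are_certificates} to identify $x^-(v) = t^-_v$ at terminals closes the chain $t^+ = \sum_{v \in T} t^+_v = \sum_{v \in T} t^-_v = \sum_{v \in T} x^-(v)$.

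For the upper bound $x(e) < 2^{|V|} t^+$, the first step will be to reduce to bounding $x^+(v)$ at every vertex: the switching constraint $x(v, s_0(v)) - x(v, s_1(v)) \in \{0, 1\}$ together with $x^+(v) = x(v, s_0(v)) + x(v, s_1(v))$ implies $x(e) \leq \lceil x^+(u)/2 \rceil$ for every outgoing edge $e = (u, \cdot)$, so it suffices to show $x^+(v) \leq (2^{|V|+1} - 2) t^+$ (or a similar bound). I would then prove this by induction on the number of non-terminals in the graph. The inductive step would eliminate a chosen non-terminal $v^*$ (for instance, one adjacent to a terminal) and pass to a smaller instance on the remaining vertices, redirecting or absorbing the flow through $v^*$ appropriately. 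The doubling factor per step comes from the switching constraint: flow entering $v^*$ is split between its two successors, so when the flow is reassembled on the reduced graph the in-flow at a single successor can grow by at most a factor of two.

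The hardest part will be handling cycles among non-terminals, which obstruct any topological-order induction. The backup plan is to compare the integer switching flow with its fractional relaxation, which satisfies the linear fixed-point equation $x^+(v) = \frac{1}{2} \sum_{(u, v) \in E} x^+(u)$ for non-terminals. This fractional version describes the expected visit counts of an absorbing random walk on $V$ that at each non-terminal picks uniformly between its two outgoing edges, with terminals acting as absorbing states. Since every non-terminal reaches a terminal by assumption, the chain is indeed absorbing, and classical exponential bounds on the fundamental matrix $(I - Q)^{-1}$ yield the fractional upper bound $O(2^{|V|}) \cdot t^+$. The integer bound then follows by showing that the per-vertex rounding gap between integer and fractional switching flows is bounded by the $\{0, 1\}$ slack in the switching constraint, and that this slack does not overshoot the exponential envelope.
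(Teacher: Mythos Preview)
Your argument for the chain of equalities $t^+ = \sum_{v \in T} t^+_v = \sum_{v \in T} t^-_v = \sum_{v \in T} x^-(v)$ is correct and coincides with the paper's reasoning.

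For the bound $x(e) < 2^{|V|} t^+$, however, you are working much harder than necessary, and neither of your two routes is actually carried out. The paper's proof is a three-line contradiction that uses only the equality you just established. Given $e = (u,v)$, take a shortest directed path $v = v_0, v_1, \dots, v_k$ from $v$ to some terminal $v_k \in T$ (so $k < |V|$ and $v_0,\dots,v_{k-1}$ are non-terminals). If $x(e) \geq 2^{|V|} t^+$ then $x^-(v_0) \geq 2^{|V|} t^+$; for a non-terminal $v_i$ flow conservation gives $x^+(v_i) = x^-(v_i)$, and switching behaviour forces each out-edge of $v_i$ to carry at least $\lfloor x^+(v_i)/2 \rfloor$, so inductively $x^-(v_i) \geq 2^{|V|-i} t^+$. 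In particular $x^-(v_k) \geq 2 t^+$, contradicting $\sum_{w \in T} x^-(w) = t^+$.

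Your induction-on-non-terminals plan and the absorbing-Markov-chain backup might be salvageable, but both leave real work undone: you have not said how to ``redirect or absorb'' the flow through $v^*$ while keeping the result an integral switching flow on the smaller graph, nor how to control the integer/fractional rounding gap uniformly enough to recover the clean constant $2^{|V|}$. The path argument avoids all of this structure entirely; the only ingredients are the halving from the switching constraint, the reachability assumption, and the global conservation law from the first part of the lemma as the source of the contradiction.
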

\begin{proof}
    The equation $\sum_{v \in T} t^+_v = \sum_{v \in T} t^-_v = \sum_{v \in T} x^-(v)$ follows from flow conservation of switching flows and Theorem~\ref{theorem:switching_flows_are_certificates}. We will now prove the upper bound on the flow values. 
    
    Let $e = (u, v) \in E$ be arbitrary. Observe that there must exist a simple path $P = (v_0, v_1, v_2, \dots, v_k)$ with $v_0 = v$ of length $0 \leq k < |V|$ from $v$ to some $v_k \in T$ (recall that we assume that at least one terminal is reachable from every non-terminal in the graph). Observe that by the switching behavior of switching flows, $x(e) \geq 2^{|V|}t^+$ would imply $x^-(v_i) \geq 2^{|V| - i}t^+$ for all $i \in [k]$. In particular, we would have $t^-_{v_k} = x^-(v_k) \geq 2 t^+$, contradicting the equation $t^+ = \sum_{v \in T} t^-_v$ above.
\end{proof}

\subsection{Treewidth and Balanced Separators}
\label{ssec:treewidth_separators}

Treewidth is a well-established graph parameter that plays an important role in parameterized algorithms, and it is intimately related to the notion of balanced separators (see e.g.\@~\cite[Chapter~7]{cyganParameterizedAlgorithms2015}). As is the case with many applications on graphs of bounded treewidth, our algorithm actually exploits the existence of balanced separators and can be formulated without computations of tree decompositions. Hence, we will refrain from formally introducing tree decompositions and instead focus on balanced separators. 

Given an undirected graph $G = (V, E)$ and a subset $S \subseteq V$ of its vertices, we use $G - S$ to denote the graph resulting from deleting the vertices in $S$ and their incident edges from $G$. We call $S$ a balanced separator if each connected component in $G - S$ contains at most $\frac{1}{2}|V|$ vertices. Note that this does not necessarily imply that $G - S$ must have more than one connected component and it could even be an empty graph (despite what the term separator may suggest): Concretely, any set $S$ of size at least $\frac{1}{2}|V|$ is a balanced separator, and thus there always exists a balanced separator. Using a brute-force approach, we can find a smallest balanced separator $S$ in $G$ in time $|V|^{\bigO(|S|)}$.

The following connection between treewidth and balanced separators is crucial for us.

\begin{lemma}[{Balanced Separators and Treewidth~\cite[Lemma 7.19]{cyganParameterizedAlgorithms2015}}]
\label{lemma:existence_balanced_separator}
    If $G$ has treewidth at most~$k$, then for every $S \subseteq V$, the subgraph $G - S$ has a balanced separator of size at most $k + 1$.
\end{lemma}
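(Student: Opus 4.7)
The plan is to reduce the statement to the cleaner claim that every graph of treewidth at most $k$ admits a balanced separator of size at most $k+1$, and then prove that claim via a centroid-style walk on a tree decomposition. The reduction step is easy: vertex deletion cannot increase treewidth, since given any tree decomposition $(T, \{X_t\}_{t \in V(T)})$ of $G$ of width at most $k$, the collection $(T, \{X_t \setminus S\}_{t \in V(T)})$ is a valid tree decomposition of $G - S$ of width at most $k$. So it suffices to show that any graph $H$ of treewidth at most $k$ contains a balanced separator of size at most $k+1$, and then apply this with $H \coloneqq G - S$.

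For the main claim, I would fix a tree decomposition $(T, \{X_t\})$ of $H$ of width at most $k$, root $T$ at an arbitrary node, and for each tree node $t$ let $V_t$ denote the union of all bags in the subtree rooted at $t$. The standard separator property of tree decompositions (which follows from the fact that the bags containing any fixed vertex form a connected subtree) implies that in $H$, removing $X_t$ separates the sets $V_c \setminus X_t$ (one for each child $c$ of $t$) and $V(H) \setminus V_t$ from one another. I would then locate a good bag by walking down the tree: starting at the root, as long as some child $c$ of the current node satisfies $|V_c| > |V(H)|/2$, descend to $c$; otherwise stop.

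Let $t^*$ denote the stopping node. By construction, every child $c$ of $t^*$ satisfies $|V_c| \leq |V(H)|/2$. Moreover, either $t^*$ is the root (so $V(H) \setminus V_{t^*} = \emptyset$) or we just descended into $t^*$ because $|V_{t^*}| > |V(H)|/2$, which forces $|V(H) \setminus V_{t^*}| < |V(H)|/2$. Combining these size bounds with the separator property, every connected component of $H - X_{t^*}$ is contained in either $V_c \setminus X_{t^*}$ for some child $c$ of $t^*$ or in $V(H) \setminus V_{t^*}$, and therefore has at most $|V(H)|/2$ vertices. Since $|X_{t^*}| \leq k+1$, this is the desired balanced separator.

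The main obstacle is ensuring that the "below" components (sitting inside child subtrees) and the "above" component (the complement of $V_{t^*}$) are simultaneously bounded by $|V(H)|/2$; this is exactly what the dichotomy in the stopping rule achieves. Beyond that, everything reduces to standard properties of tree decompositions, and the edge case $|V(H)| \leq k+1$ is handled trivially because the walk can always stop at the root and use $X_r$ (of size at most $k+1$) as the separator.
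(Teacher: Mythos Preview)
The paper does not give its own proof of this lemma; it simply quotes it from the cited textbook. Your argument is correct and is essentially the standard proof found there: first observe that vertex deletion cannot increase treewidth (so it suffices to treat $H \coloneqq G - S$), and then run a centroid-style descent on a rooted tree decomposition of $H$ to locate a bag $X_{t^*}$ such that every child subtree carries at most $|V(H)|/2$ vertices while the complement $V(H)\setminus V_{t^*}$ also has fewer than $|V(H)|/2$ vertices. Since bags separate the pieces attached at a node, $X_{t^*}$ is the desired balanced separator of size at most $k+1$.
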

This lemma allows us to use the treewidth of $G$ to infer the existence of small balanced separators in all induced subgraphs of $G$.

All of the previous concepts are defined on undirected graphs. Since we will be working exclusively with directed graphs, we will adopt the following convention: When used on a directed graph, the terms treewidth and balanced separators are to be interpreted with respect to the underlying simple undirected graph. 

\subsection{Non-Expansive, Contracting, and Monotone Functions}
\label{ssec:contraction_maps}

We will be interested in the Manhattan distance, which is induced by the $\ell_1$-norm. Concretely, we use $\norm{x} := \sum_{i = 1}^n |x_i|$ to denote the $\ell_1$-norm of a vector $x \in \R^n$. The Manhattan distance of $x, y \in \R^n$ is then given by $\norm{x - y}$. 

We are mainly concerned with the non-negative orthant $\R^n_{\geq 0} \subseteq \R^n$. A function $f : \R^n_{\geq 0} \rightarrow \R^n_{\geq 0}$ is called a $\lambda$-contraction (or is $\lambda$-contracting) for some $\lambda \in [0, 1)$ if and only if $\norm{f(x) - f(y)} \leq \lambda \norm{x - y}$ for all $x, y \in \R^n_{\geq 0}$. Banach's fixed point theorem~\cite{banach1922operations} implies that such a contracting function admits a unique fixed point. If $f$ only satisfies the weaker property $\norm{f(x) - f(y)} \leq \norm{x - y}$ for all $x, y \in \R^n_{\geq 0}$, we call it non-expansive instead.

In Section~\ref{sec:solving_arrival_via_contraction}, we reduce \garrival\ to the following computational problem: Given access to a $\lambda$-contraction $f : \R^n_{\geq 0} \rightarrow \R^n_{\geq 0}$, find an $\epsilon$-approximate fixed point of $f$, i.e.\ a point $x \in \R^n_{\geq 0}$ such that $\norm{f(x) - x} \leq \epsilon$. We will also point out how the domain of our function $f$ can be restricted to $[0, 1]^n$, if desired.

Another property that we need in some of our proofs is monotonicity with respect to the coordinate-wise partial order. Concretely, we call a function $f : X \subseteq \R^n_{\geq 0} \rightarrow \R^n_{\geq 0}$ monotone if and only if $x \leq y$ implies $f(x) \leq f(y)$ for all $x, y \in X$, where $\leq$ denotes coordinate-wise comparison.

\begin{lemma}[Monotonicity in \garrival\@~\cite{gartnerSubexponentialAlgorithmARRIVAL2021}]
\label{lemma:monotonicity_and_global_conservation}
    Consider a switch graph $G = (V, E, s_0, s_1)$ with terminals $\emptyset \neq T \subseteq V$. Consider the function $f : \N^{|T|}_0 \rightarrow \N^{|T|}_0$ that maps the vector $(t^+_v)_{v \in T}$ of starting tokens to the vector $(t^-_v)_{v \in T}$ of tokens ending at terminals. The function $f$ is monotone with respect to the coordinate-wise partial order.
\end{lemma}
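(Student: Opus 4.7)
My plan is to prove monotonicity via a coupling that exploits the order-independence (well-definedness) of G-ARRIVAL. Fix two input vectors $t^+ \leq \tilde{t}^+$ and set $\Delta^+_v := \tilde{t}^+_v - t^+_v \geq 0$ for each $v \in T$. I want to show $t^-_v \leq \tilde{t}^-_v$ for every $v \in T$.

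The first step is to leverage well-definedness in a slightly generalized form. I would reformulate the batched initial release as a sequential process: tokens leave each terminal $v$ one at a time, alternating between $(v, s_0(v))$ and $(v, s_1(v))$ starting with the even edge. After exactly $t^+_v$ tokens have been released from $v$, the split is $\lceil t^+_v / 2 \rceil$ on $s_0(v)$ and $\lfloor t^+_v / 2 \rfloor$ on $s_1(v)$, matching the original batched release; this equivalence reduces to a short parity check. In this extended framework a ``move'' is either moving a non-terminal token along its less-used out-edge, or releasing an as-yet-unreleased initial token from a terminal along its less-used out-edge; absorbed tokens at terminals never move. Using the odometer-style arguments of Dohrau et al., as generalized in the references cited in the excerpt, the final absorbed counts at terminals are independent of the order in which moves are scheduled, and the process always terminates.

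Given this, I would simulate the $\tilde{t}^+$-process in four phases: (i) release only $t^+_v$ tokens from each terminal $v$, (ii) process all non-terminal tokens until none remain, (iii) release the remaining $\Delta^+_v$ tokens from each terminal $v$, and (iv) process all non-terminal tokens again. By order-independence, this simulation yields the same final absorbed counts as the standard $\tilde{t}^+$-process. After phase (ii), the generalized configuration of the system (absorbed tokens, unreleased terminal tokens, and edge counters) restricted to the ``released so far'' part coincides with the final configuration of the $t^+$-process, so terminal $v$ already holds $t^-_v$ absorbed tokens. In phases (iii)--(iv), each of the $\sum_v \Delta^+_v$ additional tokens is eventually absorbed by some terminal (termination applied to the residual state), and previously absorbed tokens cannot leave. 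Therefore the final absorbed count at every $v$ satisfies $\tilde{t}^-_v \geq t^-_v$, which is exactly the claimed monotonicity.

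The main obstacle is justifying the order-independence of the generalized (sequential-release) process, and in particular that interleaving initial releases with non-terminal moves produces the same absorbed counts. I would address this by invoking the extensions of the Dohrau et al. well-definedness proof given by G\"artner et al.\ and Hoang, verifying that the staggered release above fits into their framework. Once this is granted, the coupling above and the termination argument give $\tilde{t}^- \geq t^-$ immediately, so the coordinate-wise monotonicity of $f$ follows.
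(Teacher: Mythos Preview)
The paper does not give its own proof of this lemma; it is stated with a citation to G\"artner et al.\ and used as a black box. So there is nothing in the paper to compare your argument against directly.

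Your proposal is the standard and correct route: phase the $\tilde t^+$-process so that it first replays the $t^+$-process exactly, then releases and routes the surplus $\Delta^+$ tokens, and conclude $\tilde t^- \ge t^-$ because absorbed tokens never leave. The only nontrivial step is the one you flag yourself, namely that the abelian/order-independence property extends to the variant where initial releases from terminals are performed sequentially and may be interleaved with non-terminal moves. This is indeed how the cited works argue (and is the usual rotor-routing abelian property), so invoking it is legitimate; just make sure that when you write it out you verify that the sequential release from a terminal with a single alternating counter reproduces the batched split $(\lceil \tilde t^+_v/2\rceil,\lfloor \tilde t^+_v/2\rfloor)$, which you already note reduces to a parity check. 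With that in place the coupling goes through and the proof is complete.
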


\section{A Family of Recursive Algorithms}
\label{sec:algorithm}

The main goal of this section is to give a parameterized algorithm for \garrival\ that runs in time $2^{\bigO \left(k \log n \log (n + \log t^+) \right)}$ on graphs with $n$ vertices, treewidth $k$, and a total of $t^+$ starting tokens. Note that for the sake of simplicity, our algorithm recurses by doing a binary search over one vertex at a time. One could instead use algorithms for the so-called \Tarski-problem (see e.g.\@~\cite{etessamiTarskiTheoremSupermodular2020}) to recurse by searching over multiple vertices (e.g.\ all vertices of a balanced separator) at a time. However, with the current best algorithms for \Tarski, this would not yield any asymptotic improvements.

\subsection{A Simple Recursive Algorithm }
\label{ssec:recursive_algorithm}

We start by explaining a simple recursive algorithm for \garrival\ that is inspired by the approach of Gärtner et al.\@~\cite{gartnerSubexponentialAlgorithmARRIVAL2021}: The algorithm chooses an arbitrary non-terminal $p \in V \setminus T$ that we call the pivot, and makes a guess $a \in \N_0$ for the outflow of $p$ in an integral switching flow. In order to verify the guess, $p$ is converted to a terminal and $t^+_p \coloneqq a$ tokens are assigned to start at $p$. In this way, we obtain again an instance of \garrival\ with one more terminal. After solving this subinstance, we can check our guess by looking at the number of tokens $t^-_p$ that arrive at $p$ in the subinstance. If we find that $t^-_p = a = t^+_p$, the integral switching flow obtained for the subinstance is also an integral switching flow for the original instance where $p$ is not a terminal. Otherwise, we have $t^-_p < a$ or $t^-_p > a$ and we use this information to adjust our guess in a binary search fashion. We make this precise in Algorithm~\ref{algo:recursive_algorithm} and use the remainder of this section to prove that Algorithm~\ref{algo:recursive_algorithm} is correct. 

\begin{algorithm}
\DontPrintSemicolon
\caption{}
\label{algo:recursive_algorithm}
\SetKwFunction{findSF}{Find-Switching-Flow-1}
\Indm\findSF{$G = (V, E, s_0, s_1), T \subseteq V, (t^+_v)_{v \in T}$}\\
\Indp
  \If{$T = V$ \tcp*{Base Case}}{
        $x(v, s_0) \gets \lceil \frac{t^+_v}{2} \rceil$ for all $v \in V = T$\\
        $x(v, s_1) \gets \lfloor \frac{t^+_v}{2} \rfloor$ for all $v \in V = T$\\
        \KwRet{$x$}
    }
    choose arbitrary $p \in V \setminus T$ \tcp*{Binary Search Case} 
    $T' \gets T \cup \{p\}$\\
    $\ell \gets 0$\\
    $r \gets 2^{|V|} t^+$\\
    \While{$\ell < r$}{
        $t^+_p \gets \lceil \frac{\ell + r}{2} \rceil$\\
        $x \gets$ \findSF{$G, T', (t^+_v)_{v \in T'}$} \\
        \If{$x^-(p) < t^+_p$}{$r \gets t^+_p - 1$}
        \If{$x^-(p) = t^+_p$}{\KwRet{$x$}}
        \If{$x^-(p) > t^+_p$}{$\ell \gets t^+_p + 1$}
    }
\end{algorithm}

\begin{lemma}[Analysis of Algorithm~\ref{algo:recursive_algorithm}]
\label{lemma:correctness_algo_1}
    Given an arbitrary \garrival-instance consisting of $G = (V, E, s_0, s_1)$ with terminals $\emptyset \neq T \subseteq V$ and starting tokens $(t^+_v)_{v \in T}$ with $t^+ \geq 1$, Algorithm~\ref{algo:recursive_algorithm} correctly returns an integral switching flow in time $ 2^{\bigO \left(|V \setminus T| \log ( |V| + \log t^+ ) \right)}$.
\end{lemma}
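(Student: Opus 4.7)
The plan is to verify correctness by induction on $|V \setminus T|$, then analyze the running time by multiplying the binary-search widths across the recursion depth.

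The base case $T = V$ is easy: no non-terminal ever holds a token, so the \garrival-procedure reduces to the initial move, and the $x$ defined in the algorithm is the run profile, hence an integral switching flow by Theorem~\ref{theorem:switching_flows_are_certificates} (switching behavior is immediate from $\lceil t^+_v/2 \rceil - \lfloor t^+_v/2 \rfloor \in \{0, 1\}$, flow conservation is vacuous, and terminal outflow equals $t^+_v$ by construction).

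The heart of the argument is justifying the binary search in the recursive case. Let $h(a)$ denote the value $x^-(p)$ in the integral switching flow returned by a recursive call on the sub-instance where $p$ is promoted to a terminal with $t^+_p = a$ starting tokens. I claim $h(a) - a$ is monotone non-increasing in $a$. Using Lemma~\ref{lemma:upper_bound_flow} (global token conservation) on the sub-instance gives $h(a) + \sum_{v \in T} t^-_v(a) = t^+ + a$, hence $h(a) - a = t^+ - \sum_{v \in T} t^-_v(a)$, and by Lemma~\ref{lemma:monotonicity_and_global_conservation} the right-hand side is monotone non-increasing in $a$. Existence of a fixed point in the initial range is witnessed by any integral switching flow $y$ of the original instance: setting $a^* := y^+(p) = y^-(p)$ (flow conservation at $p$), the restriction of $y$ is a switching flow for the sub-instance with $h(a^*) = a^*$, and Lemma~\ref{lemma:upper_bound_flow} bounds $a^*$ within the algorithm's initial interval (the bound $x(e) < 2^{|V|} t^+$ applied to both out-edges of $p$ places $a^*$ within $\bigO(2^{|V|} t^+)$, which is captured by $r$). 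Monotonicity of $h(a) - a$ then maintains the invariant ``some fixed point lies in $[\ell, r]$'' throughout the binary search: if $h(a) < a$ at the midpoint, no $a' \geq a$ can be a fixed point, so $r \leftarrow a - 1$ is safe, and the case $h(a) > a$ is symmetric. When the algorithm returns $x$ with $x^-(p) = t^+_p$, the terminal constraint at $p$ in the sub-instance becomes flow conservation at $p$ in the original instance, so $x$ is indeed an integral switching flow of the original \garrival-instance.

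For the runtime, the recursion depth equals $|V \setminus T|$ since each call converts one additional non-terminal into a terminal. The total number of starting tokens grows by at most a factor of $2^{|V|+1}$ per level (the new terminal receives at most $r = 2^{|V|} t^+$ tokens), so at depth $d$ we get $\log t^+_d = \bigO(|V| d + \log t^+)$, and the binary search at depth $d$ performs $\bigO(|V| d + \log t^+)$ iterations. Multiplying these branching factors across $|V \setminus T|$ recursion levels yields a total of $2^{\bigO(|V \setminus T| \log(|V| + \log t^+))}$ leaf calls, each doing polynomial-time bookkeeping, which matches the claimed bound.

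The main obstacle is the monotonicity argument for $h(a) - a$; without it, the binary search has no reason to locate a fixed point. Once this is established by combining Lemmas~\ref{lemma:upper_bound_flow} and~\ref{lemma:monotonicity_and_global_conservation}, the remaining pieces---the base case, the reinterpretation of the sub-instance switching flow as one for the original instance, and the token-growth accounting in the runtime---are routine bookkeeping.
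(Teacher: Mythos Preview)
Your proof follows the same induction-plus-binary-search structure as the paper, and the runtime accounting is essentially identical (the paper writes the token bound as $(1+2^{|V|})^{|V\setminus T|}t^+$ and multiplies $\bigO(\log N)$ factors across the recursion depth, just as you do). The one substantive difference is in how the binary search is justified. You show that $h(a)-a$ is non-increasing---via global conservation plus Lemma~\ref{lemma:monotonicity_and_global_conservation} applied to the terminals in $T$---and then exhibit a witness fixed point $a^\ast = y^+(p)$ coming from a switching flow $y$ of the original instance. The paper instead observes that $h$ itself is monotone (Lemma~\ref{lemma:monotonicity_and_global_conservation} applied directly to the coordinate $p$) and that $h$ maps $\{0,\dots,r\}$ into itself, establishing $h(r)\le r$ by a path-to-terminal argument: if $t^+_p = 2^{|V|}t^+$ tokens start at $p$, then at least $t^+$ of them reach some terminal in $T$, so conservation forces $h(r)\le r$. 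Both routes are valid; yours makes the sign of $h(a)-a$ explicit, which matches the branching rule more transparently, while the paper's self-map argument avoids having to produce an external witness.

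There is one small gap in your version. Lemma~\ref{lemma:upper_bound_flow} applied to the two out-edges of $p$ only gives $a^\ast = y^+(p) < 2\cdot 2^{|V|}t^+$, not $a^\ast \le r = 2^{|V|}t^+$, so your witness need not lie in the algorithm's initial interval; the phrase ``$\bigO(2^{|V|}t^+)$, which is captured by $r$'' does not close this. The paper sidesteps the issue by proving $h(r)\le r$ directly; combined with $h(0)\ge 0$ and your non-increasingness of $h(a)-a$, that forces a fixed point inside $[0,r]$. This is only a constant-factor slip and is irrelevant for the asymptotic runtime, but it is needed for correctness of the algorithm with the specific choice $r = 2^{|V|}t^+$.
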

\begin{proof}
    We start by proving correctness by induction over the size of $V \setminus T$. As a base case, observe that for $T = V$, the algorithm clearly terminates and produces an integral switching flow $x$. Thus, assume now $T \neq V$ and let $p \in V \setminus T$ be the pivot that is chosen by the algorithm. By the induction hypothesis, we can assume that all recursive calls correctly return an integral switching flow. Consider now the function $f : \{0, 1, \dots, 2^{|V| } t^+ \} \rightarrow \N_0$ that maps the guessed number of tokens $t^+_p$ to the value $x^-(p)$ returned by the recursive call. We claim that $f$ is monotone and maps $\{0, 1, \dots, 2^{|V| } t^+ \}$ to itself. Indeed, monotonicity follows directly from Lemma~\ref{lemma:monotonicity_and_global_conservation}. Moreover, we have $f(0) \geq 0$ since by definition, we must have $x^-(p) \geq 0$. For the upper bound, we recycle the argument from Lemma~\ref{lemma:upper_bound_flow}: There must exist a path $P = (p, v_1, \dots, v_k)$ of length $k < |V|$ from $p$ to some $v_k \in T$. Thus, starting $t^+_p = 2^{|V|} t^+$ tokens at $p$ would imply $x^-(v_k) \geq t^+$ by switching behavior, and hence
    \[
        x^-(p) = t^+ + t^+_p - \sum_{v \in T} x^-(v) \leq t^+ + t^+_p - x^-(v_k) \leq t^+_p
    \] 
    using Lemma~\ref{lemma:upper_bound_flow}. We conclude that binary search will successfully find a correct guess for $t^+_p$ in the given set $\{0, 1, \dots, 2^{|V|} t^+\}$.

    Having proved correctness, we move on to the bound on the overall runtime. Observe that the recursion depth of the algorithm is at most $|V \setminus T|$. Thus, the total number of tokens starting at terminals in any of the recursive call is always bounded from above by $N \coloneqq (1 + 2^{|V|})^{|V \setminus T|} t^+$. Let now $\mathcal{T}(\ell)$ denote an upper bound on the runtime of the algorithm on subinstances with $\ell$ terminals. We get that  
    \begin{align*}
        \mathcal{T}(|T|) &\leq (c \log N) \mathcal{T}(|T| + 1) \\
        &\leq (c \log N)^2 \mathcal{T}(|T| + 2) \\
        &\quad \vdots \\
        &\leq (c \log N)^{|V \setminus T|} \mathcal{T}(|V|) 
    \end{align*}
    for some constant $c$. Using the definition of $N$ and $\mathcal{T}(|V|) \leq \bigO( |V| \log N )$, this yields an overall runtime of $2^{\bigO \left(|V \setminus T| \log ( |V| + \log t^+ ) \right)}$, as desired. 
\end{proof}

\subsection{Subexponential Upper Bound}
\label{ssec:subexponential_algo}

Picking the pivot $p \in V \setminus T$ in Algorithm~\ref{algo:recursive_algorithm} arbitrarily seems quite naive. In this section, we explain how applying the ideas of Gärtner et al.\@~\cite{gartnerSubexponentialAlgorithmARRIVAL2021} yields a subexponential upper bound.

\begin{lemma}[Algorithm with Diameter-Like Bound~\cite{gartnerSubexponentialAlgorithmARRIVAL2021}]
\label{lemma:multi-run_diamter_bound}
    Consider an arbitrary \garrival-instance consisting of $G = (V, E, s_0, s_1)$ with non-empty $T \subseteq V$ and starting tokens $(t^+_v)_{v \in T}$ with $t^+ \geq 1$. Let $\ell \coloneqq \max_{v \in V \setminus T} \text{dist}(v, T)$, where $\text{dist}(v, T)$ denotes the shortest path distance from $v$ to any vertex in $T$. There is an algorithm that solves \garrival\ in time $2^\ell \poly(|V|, \log t^+)$. 
\end{lemma}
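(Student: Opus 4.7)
The plan is to modify Algorithm~\ref{algo:recursive_algorithm} so that pivots are chosen based on BFS distances to $T$, together with a distance-sensitive strengthening of Lemma~\ref{lemma:upper_bound_flow}.

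First, I would sharpen the flow bound. The proof of Lemma~\ref{lemma:upper_bound_flow} picks an arbitrary simple path of length at most $|V|-1$ from the head $v$ of an edge to some terminal; if we instead use a \emph{shortest} such path (of length $\text{dist}(v, T)$), the same doubling argument yields the stronger conclusion $x(e) < 2^{\text{dist}(v, T) + 1}\, t^+$ for every integral switching flow $x$ and every edge $e = (u, v)$. In particular, for any non-terminal $p$ we obtain $x^+(p) < 2^{\text{dist}(p, T) + 2}\, t^+$.

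Next, I would change the pivot selection in Algorithm~\ref{algo:recursive_algorithm} so that, whenever $T \subsetneq V$, it chooses some $p \in V \setminus T$ with $\text{dist}(p, T) = 1$; such a $p$ always exists because $T$ is non-empty and every non-terminal reaches $T$. With the sharpened bound, the binary search on the guessed outflow of $p$ can be restricted to $\{0, \dots, \bigO(t^+_{\text{cur}})\}$ and needs only $\bigO(\log t^+_{\text{cur}})$ iterations per call, where $t^+_{\text{cur}}$ is the current recursive instance's total token count. To extract the desired exponent, I would partition the modified recursion tree into at most $\ell$ BFS phases, where phase $i$ exhausts all pivots at distance $1$ from the current terminal set and thus strictly decreases the distance parameter. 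The refined outflow bound limits the growth of $t^+_{\text{cur}}$ to a constant factor per pivot insertion, so $\log t^+_{\text{cur}} = \bigO(|V| + \log t^+)$ is preserved throughout and per-branch work stays polynomial. An amortization that charges branching to phases rather than to individual pivots should then cap the total branching factor at $2^{\bigO(\ell)}$.

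\textbf{Main obstacle.} The delicate step is this last amortization. Naively composing the refined per-call binary-search range across all $|V \setminus T|$ recursive levels gives only $(\log t^+)^{\bigO(|V|)}$, which is far weaker than $2^\ell \poly(|V|, \log t^+)$. Replacing the exponent $|V \setminus T|$ by $\ell$ requires showing that the product of binary-search branchings inside a single BFS phase is $\bigO(1)$ rather than $\bigO(\log t^+)$, which in turn should follow from the fact that once a distance-$1$ pivot is resolved its outflow is essentially pinned down by the flow entering from the already-processed part of the graph, up to a constant-size ambiguity. Making this pinning quantitatively precise — and verifying that phases interact benignly so that the polylogarithmic per-call factors do not compound multiplicatively across all recursive levels — is where the technical heart of the proof lies.
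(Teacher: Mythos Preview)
The paper does not prove this lemma; it is quoted from G\"artner et al.\ and used as a black-box base case when instantiating the recursive framework for the subexponential bound (see the paragraph after Lemma~\ref{lemma:decomposition_lemma}). In particular, Lemma~\ref{lemma:multi-run_diamter_bound} is an \emph{input} to Algorithm~\ref{algo:recursive_algorithm}, not something derived from it.

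Your attempt to derive it from the framework has a real gap exactly where you flag the main obstacle. The claim that the product of binary-search branchings within one BFS phase is $O(1)$ is false. A distance-$1$ pivot $p$ typically receives most of its inflow from predecessors at distance $\geq 2$, whose outflow is not fixed by anything processed so far; hence $x^+(p)$ is genuinely unknown over a range of size $\Theta(t^+_{\text{cur}})$, and the binary search at $p$ alone already contributes $\Omega(\log t^+)$ branching. Concretely, let a non-terminal $r$ at distance $2$ funnel all $t^+$ starting tokens into a single distance-$1$ vertex $p$: nothing ``already processed'' constrains $x^+(p)$ at all. Since a phase can contain up to $|V|$ such pivots and there are $|V\setminus T|$ pivots overall, you remain in the $(\log t^+)^{\Omega(|V\setminus T|)}$ regime that Lemma~\ref{lemma:correctness_algo_1} already gives. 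The underlying issue is that in-edges of a switch graph need not respect the BFS layering toward $T$ --- a vertex at distance $1$ can have all its in-neighbours at distance $\ell$ --- so the ``pinned down by the already-processed part'' intuition is simply not available. The bound in G\"artner et al.\ is obtained by a different, non-recursive mechanism that directly exploits the refined flow bound $x(e)\le 2^{\ell}t^+$, rather than by nesting binary searches.
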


\begin{lemma}[Decomposition Lemma~\cite{gartnerSubexponentialAlgorithmARRIVAL2021}]
\label{lemma:decomposition_lemma}
    Let $G = (V, E, s_0, s_1)$ be an arbitrary switch graph with a non-empty set $T \subseteq V$ of terminals. There is an algorithm that finds a set $S \subseteq V \setminus T$ of size $\bigO(\sqrt{|V|})$ satisfying $\max_{v \in V \setminus (T \cup S)} \text{dist}(v, T \cup S) \leq \bigO(\sqrt{|V|} \log |V|)$ in time $\bigO(|V|)$.
\end{lemma}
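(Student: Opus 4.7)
The plan is to use a straightforward BFS-based layer partitioning together with an averaging argument. First, I would compute $d(v) := \text{dist}(v, T)$ for every $v \in V$ by running a single BFS from $T$ in the reverse graph. Since $|E| = 2|V|$, this takes $\bigO(|V|)$ time, and because every non-terminal reaches a terminal, all distances are finite and at most $|V|-1$. The layers $L_\ell := \{v \in V : d(v) = \ell\}$ then form a partition of $V$ with $L_0 = T$.

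Next, I would group these layers into consecutive \emph{blocks} $B_0, B_1, \ldots$ of $m := \lceil \sqrt{|V|} \rceil$ layers each, giving at most $m$ blocks overall. In each block $B_i$ I would pick one layer $L_{\ell_i^*} \subseteq B_i$ of minimum cardinality subject to $\ell_i^* \geq 1$ (so that $L_0 = T$ is never selected), and let $S$ be the union of these chosen layers; by construction $S \subseteq V \setminus T$. Since each block contains $\Theta(m)$ layers, an averaging argument shows that $L_{\ell_i^*}$ contributes at most $|B_i|/m$ vertices to $S$, and summing across blocks yields $|S| \leq |V|/m = \bigO(\sqrt{|V|})$.

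For the distance bound I rely on the standard BFS-monotonicity observation that on any shortest path from a vertex $v$ with $d(v) = \ell$ to $T$, the distance value $d$ strictly decreases by one at each step, so $v$ admits a path of length $\ell - \ell'$ ending in $L_{\ell'}$ for every $0 \leq \ell' \leq \ell$. Now take any $v \in V \setminus (T \cup S)$ and let $B_i$ be its block. If $d(v) \geq \ell_i^*$, then $v$ reaches $L_{\ell_i^*} \subseteq S$ in at most $d(v) - \ell_i^* < m$ steps. Otherwise $v$ walks along a shortest path toward $T$, eventually entering $B_{i-1}$ where it hits $L_{\ell_{i-1}^*} \subseteq S$, or reaching $T$ directly if $i = 0$; either way within $2m$ steps. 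Therefore $\max_{v \in V \setminus (T \cup S)} \text{dist}(v, T \cup S) \leq 2m = \bigO(\sqrt{|V|}) \leq \bigO(\sqrt{|V|} \log |V|)$, and the whole procedure runs in $\bigO(|V|)$ time, dominated by the initial BFS.

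The main obstacle I anticipate is the case where $v$ sits strictly below (closer to $T$ than) the chosen layer of its own block, since then every forward edge tends to push $v$ away from $L_{\ell_i^*}$; the BFS-monotonicity property is exactly what makes the fall-back to the block below (or to $T$ itself when $i = 0$) work cleanly within a further $m$ layers. Apart from this subtlety, the argument is purely a BFS-and-averaging exercise, and notably achieves max distance $\bigO(\sqrt{|V|})$, which is slightly stronger than the stated $\bigO(\sqrt{|V|} \log |V|)$.
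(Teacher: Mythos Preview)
The paper does not actually prove this lemma; it is imported verbatim from G\"artner et al.\ and only used as a black box in Section~\ref{ssec:subexponential_algo}. So there is nothing in the present paper to compare your argument against.

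That said, your BFS-layer argument is correct. The one place to be slightly more careful is block $B_0$: since you exclude $L_0 = T$ from the selection, you are averaging over $m-1$ rather than $m$ layers there, but this only costs a constant factor in $|S|$ and does not affect the $\bigO(\sqrt{|V|})$ bound. The distance analysis is clean once you use that a shortest path to $T$ decreases the BFS level by exactly one per step, so a vertex below its block's chosen layer falls through to the chosen layer of the previous block (or to $T$ itself when $i=0$) within at most $2m$ steps.

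As you note, you in fact obtain $\max_{v \in V \setminus (T \cup S)} \text{dist}(v, T \cup S) = \bigO(\sqrt{|V|})$ rather than the $\bigO(\sqrt{|V|}\log|V|)$ stated in the lemma. The extra $\log|V|$ in the cited formulation presumably stems from a different construction in the original reference; your layer-averaging argument is more direct and gives the tighter bound, which would in turn shave a $\log|V|$ factor off the exponent in the subexponential running time derived in Section~\ref{ssec:subexponential_algo}.
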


Given those two observations, it is now not hard to adapt Algorithm~\ref{algo:recursive_algorithm} to run in subexponential-time: We first precompute the set $S$ from Lemma~\ref{lemma:decomposition_lemma} in linear time. Then, we run Algorithm~\ref{algo:recursive_algorithm} with two changes: In the recursive step, we make sure to always pick a pivot $p$ from $S$ instead of all of $V \setminus T$. Further, we add a new base case that applies the algorithm from Lemma~\ref{lemma:multi-run_diamter_bound} as soon as the parameter $\ell$ from Lemma~\ref{lemma:multi-run_diamter_bound} has shrunk to $\bigO(\sqrt{|V|} \log |V|)$, which is guaranteed to happen at the latest once all of the vertices in $S$ have been turned into terminals. With these two changes, the recursion depth will become $\bigO(\sqrt{|V|})$, and since the base case also runs in time exponential only in $\bigO(\sqrt{|V|} \log |V|)$, we get an overall subexponential runtime. 

As observed by Gärtner et al.\@~\cite{gartnerSubexponentialAlgorithmARRIVAL2021}, one can do better on graphs with a small feedback vertex set: The crucial ingredient is that \garrival\ can be solved efficiently on acyclic graphs by greedy simulation of the tokens. Using this as a base case and choosing pivots from a feedback vertex set yields yet another instantiation of the framework provided by the recursive algorithm. Concretely, this yields a polynomial-time algorithm on graphs with a bounded feedback vertex set.  

\subsection{Exploiting Balanced Separators}
\label{ssec:exploiting_balanced_separators}

We now describe an adaptation of 
Algorithm~\ref{algo:recursive_algorithm} that works well on graphs of small treewidth. The general idea is again to pick the pivot $p \in V \setminus T$ appropriately. Concretely, as mentioned in Section~\ref{ssec:treewidth_separators}, small treewidth ensures small balanced separators in all induced subgraphs of our input graph. Thus, it seems intuitive that in each step, we should pick the pivot $p$ from a balanced separator of the graph induced by the remaining non-terminals. Eventually, this should disconnect the induced graph into independent subinstances, each with a significantly smaller number of non-terminals. Recursing on all subinstances independently yields the desired speedup. We make this precise in Algorithm~\ref{algo:treewidth_algo} and analyse the algorithm in the remainder of this section.

\begin{algorithm}
\DontPrintSemicolon
\caption{The main difference to Algorithm~\ref{algo:recursive_algorithm} is that we choose our pivot from a smallest balanced separator $S$ that is passed through the recursive calls. As soon as all vertices in the separator have been turned into terminals, we can split the instance into independent subinstances and proceed from there. }
\label{algo:treewidth_algo}
\SetKwFunction{findSFone}{Find-Switching-Flow-1}
\SetKwFunction{findSF}{Find-Switching-Flow-2}
\Indm\findSF{$G = (V, E, s_0, s_1), T \subseteq V, (t^+_v)_{v \in T}, S \subseteq V \setminus T$}\\
\Indp
  \If{$T = V$ \tcp*{Base Case}}{
        $x(v, s_0) \gets \lceil \frac{t^+_v}{2} \rceil$ for all $v \in V = T$\\
        $x(v, s_1) \gets \lfloor \frac{t^+_v}{2} \rfloor$ for all $v \in V = T$\\
        \KwRet{$x$}
    }
    
    \If{$|S| > 0$ \tcp*{Binary Search Case}}{
        choose arbitrary $p \in S$\\
        $T' \gets T \cup \{p\}$\\
        $S' \gets S \setminus \{p\}$\\
        $\ell \gets 0$\\
        $r \gets  2^{|V|} t^+$\\
        \While{$\ell < r$}{
            $t^+_p \gets \lceil \frac{\ell + r}{2} \rceil$\\
            $x \gets$ \findSF{$G, T', (t^+_v)_{v \in T'}, S'$} \\
            \If{$x^-(p) < t^+_p$}{$r \gets t^+_p - 1$}
            \If{$x^-(p) = t^+_p$}{\KwRet{$x$}}
            \If{$x^-(p) > t^+_p$}{$\ell \gets t^+_p + 1$}
        }
    }
    \Else(\tcp*[f]{Splitting Case}){ 
        \For{every connected component $C \subseteq V \setminus T$ of $G- T$ (undirected)}{
            let $G_C$ be the graph (directed) obtained from $G$ by removing all vertices not in $C \cup T$, their outgoing edges, and replacing edges leaving the set $C \cup T$ by self-loops.\\
            find a smallest balanced separator $S_C \subseteq C$ of $G_C - T$ (undirected) \\
            $x^{(C)} \gets$ \findSF{$G_C, T, (t^+_v)_{v \in T}, S_C$}\\
        }
        combine solutions of each connected component to $x$ (see Lemma~\ref{lemma:correctness_algo_2} for details)\\
        \KwRet{$x$}
    }
\end{algorithm}

\begin{lemma}[Correctness of Algorithm~\ref{algo:treewidth_algo}]
\label{lemma:correctness_algo_2}
    Given an arbitrary \garrival-instance consisting of $G = (V, E, s_0, s_1)$ with non-empty $T \subseteq V$, starting tokens $(t^+_v)_{v \in T}$ with $t^+ \geq 1$, and a balanced separator $S$ for $G - T$, Algorithm~\ref{algo:treewidth_algo} correctly returns an integral switching flow.
\end{lemma}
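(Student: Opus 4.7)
The proof goes by induction on the number of non-terminals $|V \setminus T|$, handling the base case, the binary-search branch ($|S| > 0$), and the splitting branch ($|S| = 0$) in turn. The base case $T = V$ is immediate from the construction, and the binary-search case is essentially identical to the analysis in Lemma~\ref{lemma:correctness_algo_1}: by the induction hypothesis each recursive call on $T' = T \cup \{p\}$ returns an integral switching flow, the map $t^+_p \mapsto x^-(p)$ is monotone by Lemma~\ref{lemma:monotonicity_and_global_conservation}, and the path-from-$p$ argument from the proof of Lemma~\ref{lemma:upper_bound_flow} gives $x^-(p) \leq t^+_p$ once $t^+_p = 2^{|V|} t^+$, while $x^-(p) \geq 0 = t^+_p$ at the other endpoint. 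Hence the binary search pinpoints a guess with $x^-(p) = t^+_p$, and the returned flow for $T'$ is a valid switching flow for $T$ as well. Note that correctness does not actually require $S$ to be a balanced separator; this hypothesis will only be used later for the runtime bound.

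The heart of the argument is the splitting case. For each connected component $C$ of $G - T$ the algorithm builds $G_C$ by restricting to $C \cup T$ and redirecting every edge leaving $C \cup T$ to a self-loop at its source; since $C$ is an undirected connected component of $G - T$, such leaving edges can only emanate from terminals in $T$. The induction hypothesis yields an integral switching flow $x^{(C)}$ on $G_C$ for every $C$. I would then combine these into a single assignment $x : E \rightarrow \N_0$ by setting $x(u, v) = x^{(C)}(u, v)$ whenever $u \in C$, and $x(u, s_0(u)) = \lceil t^+_u / 2 \rceil$, $x(u, s_1(u)) = \lfloor t^+_u / 2 \rfloor$ whenever $u \in T$. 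The key observation making this consistent is that the switching-behavior constraint together with $x^{+}(u) = t^+_u$ already forces exactly these values at every terminal $u$ inside each $G_C$, regardless of whether the corresponding edge was preserved or turned into a self-loop.

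It remains to verify that $x$ is a switching flow on $G$, which I expect to be the main obstacle. Switching behavior and the starting-token constraints at terminals are immediate. For flow conservation at a non-terminal $u \in C$, the crucial observation is that no edge of $G$ connects $u$ to a vertex of a different component, so all edges incident to $u$ have their other endpoint in $C \cup T$ and are preserved in $G_C$. Consequently $x^+(u)$ and $x^-(u)$ in $G$ equal the corresponding quantities under $x^{(C)}$ in $G_C$, and flow conservation transfers. The remaining concern is that the self-loops introduced at terminals alter $x^{(C)-}(u)$ for $u \in T$, but flow conservation is not imposed at terminals, so they are harmless. Once this bookkeeping is checked, Theorem~\ref{theorem:switching_flows_are_certificates} confirms that the combined $x$ correctly predicts the values $(t^-_v)_{v \in T}$ for the original \garrival-instance.
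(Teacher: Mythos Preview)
Your proof is correct and follows essentially the same route as the paper: the base case and the binary-search case are deferred to the analysis in Lemma~\ref{lemma:correctness_algo_1}, and in the splitting case you assemble the per-component flows $x^{(C)}$ into a single $x$ and verify the switching-flow axioms using that every edge incident to a non-terminal $u\in C$ has its other endpoint in $C\cup T$ (so it survives in $G_C$). You are in fact a bit more explicit than the paper about the forced values on terminal out-edges (the paper only remarks that they agree across all $G_C$); one small caveat is that your aside ``correctness does not require $S$ to be a balanced separator'' is morally right but sits in mild tension with the plain induction on $|V\setminus T|$ you set up, since a splitting call where $G-T$ has a single component would not strictly decrease the parameter---this is harmless once you note that the freshly computed $S_C$ is always nonempty (as $C$ is connected), so the very next call enters the binary-search branch.
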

\begin{proof}
    Correctness of the base case and the binary search case follows from the same arguments as in Lemma~\ref{lemma:correctness_algo_1} (correctness of Algorithm~\ref{algo:recursive_algorithm}). The only thing we changed is that our pivot is chosen from $S$ instead of all of $V \setminus T$.
    
    It remains to argue correctness of the new splitting case. For this, assume that $S = \emptyset$ and $V \neq T$. In particular, $G - T$ has at least one non-empty connected component. For each connected component $C \subseteq V \setminus T$ of $G - T$, the algorithm computes the switch graph $G_C$ obtained from $G$ by deleting all vertices $V \setminus (C \cup T)$, their outgoing edges, and replacing edges leaving $C \cup T$ by self-loops. Moreover, assume that we are given an integral switching flow $x^{(C)}$ for each of those \garrival-subinstances. Observe that every edge $e = (u, v)$ in $G$ with $u \in V \setminus T$ appears in exactly one subinstance $G_C$: Indeed, we must have $u \in C$ for some connected component $C$, and thus $e$ can only appear in $G_C$. Hence, we can uniquely assign $x(e) \coloneqq x^{(C)}(e)$ for each edge $e$ with corresponding connected component $C$. If we instead have $u \in T$, then $e$ appears in at least one subinstance $G_C$. However, since $u \in T$, the value $x^{(C)}(e)$ must be the same for all subinstances $G_C$ that $e$ appears in. Therefore, we can safely assign $x(e) \coloneqq x^{(C)}(e)$ for any of those connecteced components $C$. This fully defines $x$, and it remains to prove that it is a switching flow. This is not hard to see, again by distinguishing between vertices from $T$ and $V \setminus T$. For any vertex $u \in T$, switching behavior holds at $u$ because it holds in all subinstances (where the outgoing edges of $u$ have the exact same values even if they were converted into self-loops). Similarly, if we instead have $u \in C$ for some connected component $C$, then $x(u, s_0(u))$ and $x(u, s_1(u))$ are taken from $x^{(C)}$, where switching behavior must hold by the assumption that $x^{(C)}$ is a switching flow. Flow conservation only has to hold for non-terminals, and it holds due to the fact that all incoming edges of $u \in C$ in $G$ must be present in $G_C$ as well, implying that the flow conservation from the subinstance carries over. We conclude that $x$ is indeed an integral switching flow.
\end{proof}

\begin{lemma}
\label{lemma:splitting_soon}
    Assume that we run Algorithm~\ref{algo:treewidth_algo} on a \garrival-instance consisting of $G = (V, E, s_0, s_1)$ with non-empty $T \subseteq V$ and starting tokens $(t^+_v)_{v \in T}$ with $t^+ \geq 1$. Further assume that we input a smallest balanced separator $S$ of the subgraph $G - T$, and assume that $G$ has treewidth at most $k$. 
    Then the algorithm cannot reach recursion depth $k + 2$ without at least once recursing in a splitting case. Moreover, if a splitting case is reached, then each connected component $C \subseteq V \setminus (T \cup S)$ satisfies $|C| \leq \frac{|V \setminus T|}{2}$.
\end{lemma}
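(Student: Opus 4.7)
The plan is to track the size of the separator $S$ as it passes through the recursion, using the fact that it shrinks by exactly one vertex per binary search case. I would begin by invoking Lemma~\ref{lemma:existence_balanced_separator}: since $G$ has treewidth at most $k$, the induced subgraph $G - T$ admits a balanced separator of size at most $k+1$, and we are given the smallest such separator as input, so $|S| \leq k+1$.

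For the first claim, I would walk along any root-to-leaf path in the recursion tree of Algorithm~\ref{algo:treewidth_algo}. By inspection, the binary search case recurses with $S' = S \setminus \{p\}$, so the separator strictly shrinks in the child. Hence along a path consisting exclusively of binary search cases, after at most $k+1$ steps the separator is emptied. At the next call, the guard $|S| > 0$ fails, so the call cannot be a binary search case; it must be either a splitting case (if $T \neq V$) or the base case (if $T = V$). In particular, reaching recursion depth $k+2$ forces at least one splitting case somewhere along the path.

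For the second claim, suppose the algorithm reaches a splitting case at some recursive call. Then the current value of $S$ at that call is empty. But the only way a vertex ever leaves $S$ in the recursion is through a binary search case moving the chosen pivot into $T$, so the current terminal set equals $T \cup S$, where $T$ and $S$ now denote the original top-level inputs. Since $S$ is a balanced separator of the undirected graph $G - T$, every connected component of $(G - T) - S = G - (T \cup S)$ contains at most $|V \setminus T|/2$ vertices, which is exactly the bound claimed for each component $C$ in the splitting case. I do not foresee any real obstacle beyond the bookkeeping needed to keep the initial values of $T$ and $S$ separate from their evolving counterparts inside recursive calls.
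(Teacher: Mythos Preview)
Your proposal is correct and follows essentially the same argument as the paper: bound $|S|\le k+1$ via Lemma~\ref{lemma:existence_balanced_separator}, observe that each binary search step removes one pivot from $S$ and adds it to $T$, and conclude that the first splitting case occurs with current terminal set $T\cup S$, whence the balanced-separator property of $S$ in $G-T$ gives the component-size bound. Your version is slightly more explicit about the base case and the bookkeeping between the original and evolving copies of $T$ and $S$, but the underlying reasoning is identical.
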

\begin{proof}
    By Lemma~\ref{lemma:existence_balanced_separator}, $S$ has size at most $k + 1$. Thus, by only using the binary search case, the algorithm can reach a recursion depth of at most $k + 1$. This implies that to reach depth $k + 2$, it must at least once have recursed in a splitting case. This happens once all vertices in $S$ have been turned into terminals, and the remaining non-terminals are $V \setminus (T \cup S)$. Since $S$ was chosen as a smallest balanced separator of $G - T$, each connected component in the graph $G - T - S$ consists of at most $\frac{|V \setminus T|}{2}$ vertices.
\end{proof}

\begin{theorem}
    Given a \garrival-instance consisting of $G = (V, E, s_0, s_1)$ with non-empty $T \subseteq V$, starting tokens $(t^+_v)_{v \in T}$ with $t^+ \geq 1$, and a smallest balanced separator $S$ of $G - T$, Algorithm~\ref{algo:treewidth_algo} computes an integral switching flow $x$ in time $2^{\bigO \left(k \log (|V \setminus T|) \log (|V | + \log t^+) \right)}$, where $k$ is the treewidth of $G$.
\end{theorem}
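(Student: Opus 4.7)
The plan is to leverage Lemma~\ref{lemma:correctness_algo_2} for correctness and focus exclusively on bounding the runtime of Algorithm~\ref{algo:treewidth_algo}. The algorithm's recursion tree has two types of nodes: binary search nodes, each of which spawns $\bigO(\log N)$ sequential recursive calls on a subinstance with one fewer non-terminal, and splitting nodes, each of which partitions the remaining non-terminals into connected components and recurses on each. My strategy is to first bound the depth of this recursion tree, then bound the branching factor throughout, and finally take the product.

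For the depth, I would apply Lemma~\ref{lemma:splitting_soon}: any maximal chain of consecutive binary search nodes has length at most $k+1$ (since every separator handed down satisfies $|S| \le k+1$ by Lemma~\ref{lemma:existence_balanced_separator}), after which a splitting node must occur; moreover, each splitting node shrinks $|V \setminus T|$ by at least a factor of two in each child. So any root-to-leaf path contains at most $\log_2 |V\setminus T|$ splitting nodes and at most $(k+1)(\log_2 |V\setminus T| + 1)$ binary search nodes, giving total recursion depth $\bigO(k \log |V\setminus T|)$.

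The subtle part is bounding the binary search range, because the token count inflates down the recursion. Adapting the flow-bound argument from the proof of Lemma~\ref{lemma:correctness_algo_1}, after $d$ binary search nodes the token total of the subinstance is at most $(1 + 2^{|V|})^d t^+$; plugging in $d = \bigO(k \log |V\setminus T|)$ yields a uniform upper bound $N$ on the range of any binary search with $\log N = \bigO(|V|\,k \log |V\setminus T| + \log t^+)$. Each binary search node then has out-degree $\bigO(\log N)$, each splitting node has out-degree at most $|V|$, and therefore the recursion tree has at most $(\log N + |V|)^{\bigO(k \log |V\setminus T|)} = 2^{\bigO(k \log |V\setminus T| \log(|V| + \log t^+))}$ nodes.

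The per-node overhead — polynomial in $|V|$ and $\log t^+$ for the algorithm itself, plus $|V|^{\bigO(k)}$ to brute-force a smallest balanced separator at each splitting node — is absorbed into this bound, yielding the claimed runtime. The main obstacle is precisely this bookkeeping step: one must verify that Lemma~\ref{lemma:splitting_soon} really does apply to every subinstance in the tree (which works because each splitting node recomputes a smallest balanced separator of the corresponding induced subgraph, and such a separator still has size at most $k+1$ by Lemma~\ref{lemma:existence_balanced_separator}), and that the token-count inflation is measured along a single root-to-leaf path rather than compounded across siblings.
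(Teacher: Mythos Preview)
Your proposal is correct and follows essentially the same approach as the paper. Both arguments bound the recursion depth by $\bigO(k\log|V\setminus T|)$ via Lemma~\ref{lemma:splitting_soon} and then multiply out the branching; you phrase this as a direct count of recursion-tree nodes, while the paper unrolls a two-parameter recurrence $\mathcal{T}(\ell,q)$, but the substance is identical (the paper also uses the cruder bound $N=(1+2^{|V|})^{|V\setminus T|}t^+$ rather than your depth-based $(1+2^{|V|})^{\bigO(k\log|V\setminus T|)}t^+$, though after taking $\log\log N$ both give the same exponent).
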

\begin{proof}
    As in the analysis of Lemma~\ref{lemma:correctness_algo_1}, $N  \coloneqq (1 + 2^{|V|})^{|V \setminus T|} t^+$ is an upper bound on the total number of starting tokens in any recursive call (since the recursion depth is still certainly at most $|V \setminus T|$). Compared to the analysis in Lemma~\ref{lemma:correctness_algo_1}, we now additionally have to include the splitting case, which also includes finding small balanced separators in time at most $|V|^{\bigO(k)}$. Let $\mathcal{T}(\ell, q)$ denote an upper bound on the runtime of subinstances with $\ell$ non-terminals and a set $S$ of size $q$. Observe that by Lemma~\ref{lemma:splitting_soon}, we get the upper bound
    \begin{align*}
        \mathcal{T}(|V \setminus T|, k + 1) &\leq (c \log N) \mathcal{T}(|V \setminus T|, k) \\
        &\quad \vdots \\
        &\leq (c \log N)^{k + 1} \mathcal{T}(|V \setminus T|, 0) \leq |V|^{c' k} (c \log N)^{k + 1} \mathcal{T} \left( \frac{|V \setminus T|}{2}, k + 1 \right)
    \end{align*}
    where the last inequality comes from the splitting case and accounts for the search of new balanced separators. Repeating this, we then get 
    \begin{align*}
        \mathcal{T}(|V \setminus T|, k + 1) &\leq \dots \leq |V|^{c' k} (c \log N)^{k + 1} \mathcal{T} \left( \frac{|V \setminus T|}{2}, k + 1 \right) \\ 
        &\quad \vdots \\
        &\leq \dots \leq |V|^{c' k \log{|V \setminus T|}} (c \log N)^{(k + 1) \log{|V \setminus T|} } \mathcal{T} \left( 0, 0 \right) 
    \end{align*}
    for constants $c, c'$. Using the definition of $N$ and $\mathcal{T} \left( 0, 0 \right) \leq \poly(\log N, n)$, this implies an overall upper bound of $2^{\bigO \left(k \log (|V \setminus T|) \log (|V | + \log t^+) \right)}$, as desired.
\end{proof}

\section{Reduction to \texorpdfstring{$\ell_1$}{l1}-Contraction}
\label{sec:solving_arrival_via_contraction}

The goal of this section is to prove that \garrival\ reduces to finding an approximate fixed point of an $\ell_1$-contracting function. For this, we will frequently use the functions $h_0, h_1 : \R_{\geq 0} \rightarrow \R_{\geq 0}$ defined as 
\[
    h_0(x) \coloneqq \min \left\{ x - \left\lfloor \frac{x}{2} \right\rfloor, \left\lceil \frac{x}{2} \right\rceil  \right\}  \quad \text{ and } \quad h_1(x) \coloneqq \max \left\{ \left\lfloor \frac{x}{2} \right\rfloor, x - \left\lceil \frac{x}{2} \right\rceil \right\}
\]
for all $x \in \R_{\geq 0}$. Observe that both $h_0$ and $h_1$ are continuous and monotone, and that we have $h_0(x) + h_1(x) = x$ as well as $h_1(x) \leq h_0(x) \leq h_1(x) + 1$ for all $x \in \R_{\geq 0}$.

\begin{figure}[ht]
    \centering
    \includegraphics[width=0.5\linewidth]{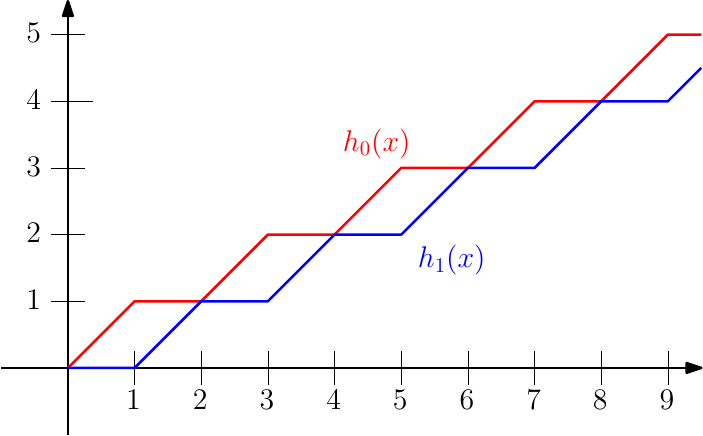}
    \caption{The functions $h_0$ and $h_1$.}
    \label{fig:enter-label}
\end{figure}

\subsection{One-Step Update}
\label{ssec:one-step_update}

Consider a vector $x \in \R^n_{\geq 0}$ and think of it as token mass that is distributed among the vertices of a switch graph, with $x_v$ token mass currently occupying vertex $v$. We want to define a notion of moving all (possibly fractional) tokens by one step each while respecting the switching rules. The following one-step update function captures this idea, with the intuition that $f(x)_v$ is the amount token mass that $v \in V$ receives from its predecessors.

\begin{definition}[One-Step Update]
    Let $G = (V, E, s_0, s_1)$ be a switch graph. The one-step update function $f : \R^{n}_{\geq 0} \rightarrow \R^{n}_{\geq 0}$ associated with this switch graph is defined as 
    \[
        f(x)_v := \sum_{u : s_0(u) = v} h_0(x_u) + \sum_{u : s_1(u) = v} h_1(x_u)
    \]
    for all $v \in V$.
\end{definition}

\begin{lemma}[Non-Expansiveness and Monotonicity]
\label{lemma:non-expansiveness_monotonicity}
    The one-step update $f : \R^n_{\geq 0} \rightarrow \R^n_{\geq 0}$ associated with the switch graph $G = (V, E, s_0, s_1)$ is monotone and non-expansive.
\end{lemma}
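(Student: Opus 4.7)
The plan is to establish monotonicity directly from the coordinate-wise behavior of $h_0$ and $h_1$, and then to establish non-expansiveness by a triangle-inequality argument that reduces everything to a pointwise bound at each predecessor vertex. Throughout, the crucial property I will exploit is the decomposition $h_0(x) + h_1(x) = x$ combined with the monotonicity of both $h_0$ and $h_1$, both of which are recorded in the paragraph preceding the lemma.

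For monotonicity, I would fix $x, y \in \R^n_{\geq 0}$ with $x \leq y$ coordinate-wise and show $f(x)_v \leq f(y)_v$ for every $v \in V$. Since $x_u \leq y_u$ for each $u$, monotonicity of $h_0$ and $h_1$ yields $h_0(x_u) \leq h_0(y_u)$ and $h_1(x_u) \leq h_1(y_u)$. Summing the two sums in the definition of $f(x)_v$ against the corresponding sums for $f(y)_v$ gives the desired inequality termwise.

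For non-expansiveness, I would start from the definition and apply the triangle inequality coordinatewise:
\[
    \abs{f(x)_v - f(y)_v} \;\leq\; \sum_{u : s_0(u) = v} \abs{h_0(x_u) - h_0(y_u)} + \sum_{u : s_1(u) = v} \abs{h_1(x_u) - h_1(y_u)}.
\]
Summing over $v \in V$ and swapping the order of summation, each vertex $u$ contributes its $h_0$-term exactly once (for $v = s_0(u)$) and its $h_1$-term exactly once (for $v = s_1(u)$), because $s_0$ and $s_1$ are functions. This yields
\[
    \norm{f(x) - f(y)} \;\leq\; \sum_{u \in V} \Bigl( \abs{h_0(x_u) - h_0(y_u)} + \abs{h_1(x_u) - h_1(y_u)} \Bigr).
\]
It then suffices to prove the pointwise inequality $\abs{h_0(a) - h_0(b)} + \abs{h_1(a) - h_1(b)} \leq \abs{a - b}$ for all $a, b \in \R_{\geq 0}$. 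Assuming without loss of generality $a \geq b$, monotonicity of $h_0$ and $h_1$ lets me drop the absolute values, and the identity $h_0 + h_1 = \mathrm{id}$ telescopes the left-hand side to exactly $a - b$, giving equality and in particular the desired inequality.

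I do not expect any serious obstacle. The only subtle bookkeeping point is making sure the swap of summation is valid even when $s_0(u) = s_1(u)$: in that degenerate case the vertex $u$ still contributes $h_0(x_u)$ and $h_1(x_u)$ exactly once each to $f(x)_{s_0(u)} = f(x)_{s_1(u)}$, so the double-counting argument goes through unchanged. The pointwise inequality, which is the heart of the argument, is essentially forced by the telescoping $h_0 + h_1 = \mathrm{id}$ together with monotonicity of the two pieces, and I would present it as a short standalone observation.
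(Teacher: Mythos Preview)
Your proposal is correct and follows essentially the same approach as the paper: both arguments derive monotonicity directly from that of $h_0$ and $h_1$, and prove non-expansiveness by applying the triangle inequality, regrouping so that each vertex $u$ contributes its $h_0$- and $h_1$-terms exactly once, and invoking the pointwise identity $\abs{h_0(a)-h_0(b)}+\abs{h_1(a)-h_1(b)}=\abs{a-b}$ (which follows from monotonicity together with $h_0+h_1=\mathrm{id}$). The only cosmetic difference is that the paper states this pointwise identity upfront, whereas you establish it at the end.
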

\begin{proof}
    Monotonicity of $f$ follows directly from monotonicity of $h_0$ and $h_1$. Thus, it remains to prove that $f$ is non-expansive. Let $x, y \in \R^n_{\geq 0}$ be arbitrary. By the previously discussed properties of $h_0$ and $h_1$, we have 
    \begin{align*}
        |x_v - y_v| &= | h_0(x_v) - h_0(y_v)| + | h_1(x_v) - h_1(y_v)  |
    \end{align*}
    for all $v \in V$. With this, we calculate
    \begin{align*}
        ||f(x) - f(y)|| &= \sum_{v \in V} | f(x)_v - f(y)_v | \\
        &= \sum_{v \in V} \left| \sum_{u: s_0(u) = v} (h_0(x_u) - h_0(y_u)) 
        + \sum_{u: s_1(u) = v}  ( h_1(x_u) - h_1(y_u) ) \right| \\
        &\leq \sum_{u \in V}  \left| h_0(x_u) - h_0(y_u) \right| + \sum_{u \in V}  \left| h_1(x_u) - h_1(y_u) \right| \\
        &= \sum_{u \in V} |x_u - y_u| \\
        &= || x - y ||,
    \end{align*}
    where we went from summing over every edge $(u, v)$ by its target $v$ to summing over every edge $(u, v)$ by its source $u$ in the inequality-step.
\end{proof}
Next, we consider what happens if we reintroduce terminals. Concretely, we want to fix $x_v = t^+_v$ for all $v \in T$ and consider the resulting function on the non-terminal vertices.

\begin{definition}[Extension and Projection]
    Let $f : \R^n_{\geq 0} \rightarrow \R^n_{\geq 0}$ be the one-step update associated with the switch graph $G = (V, E, s_0, s_1)$. Assume that we are given terminals $T \subseteq V$ with starting tokens $(t^+_v)_{v \in T}$. For arbitrary $x \in \R^{|V \setminus T|}_{\geq 0}$, let the extension $x' \in \R^n_{\geq 0}$ of $x$ denote the vector
    \[
        x'_v = \begin{cases}
            t^+_v & \text{ if } v \in T \\
            x_v & \text{ otherwise } 
        \end{cases}
    \]
    obtained by filling in the values $t^+_v$ for terminals $v \in T$. Moreover, we define the projection $g : \R^{|V \setminus T|}_{\geq 0} \rightarrow \R^{|V \setminus T|}_{\geq 0}$ of $f$ to non-terminals as $g(x)_v \coloneqq f(x')_v$
    for all $v \in V \setminus T$ and $x \in \R^{|V \setminus T|}_{\geq 0}$.
\end{definition}
Observe that the projected one-step update $g$ is still non-expansive and monotone: In particular, non-expansiveness can be obtained by
\[
    \norm{g(x) - g(y)} \leq \norm{f(x') - f(y')} \leq \norm{x' - y'} = \norm{x - y}
\]
for all $x, y \in \R_{\geq 0}^{|V \setminus T|}$.

The next lemma says that the fixed points of the projected one-step update reveal the solution to the given \garrival-instance. 

\begin{lemma}[Interpretation of One-Step Update]
\label{lemma:interpretation}
    Let $f : \R^n_{\geq 0} \rightarrow \R^n_{\geq 0}$ be the one-step update associated with the switch graph $G = (V, E, s_0, s_1)$. Assume that we are given terminals $T \subseteq V$ with starting tokens $(t^+_v)_{v \in T}$, and let $g : \R^{|V \setminus T|}_{\geq 0} \rightarrow \R^{|V \setminus T|}_{\geq 0}$ be the projection of $f$ to non-terminals. Let $x \in \R^{|V \setminus T|}_{\geq 0}$ be arbitrary and consider its extension $x' \in \R^n_{\geq 0}$. Finally, define 
    \[
        y(v, s_0(v)) := h_0(x'_v) \quad
        y(v, s_1(v)) := h_1(x'_v) 
    \]
    for all $v \in V$. Then the following two statements are true: 
    \begin{itemize}
        \item If $x$ is a fixed point of $g$ and $y(e)$ is fractional for some edge $e \in E$, then there exists a directed fractional cycle $C$ containing $e$. 
        \item $x$ is an integral fixed point of $g$ if and only if $y$ is an integral switching flow. 
    \end{itemize}
\end{lemma}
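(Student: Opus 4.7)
The plan is to handle the two parts of the lemma separately. For the second statement (the integral equivalence), I would simply unfold the definitions. If $x$ is an integral fixed point, then $x'$ is integer-valued everywhere (using $t^+_v \in \N_0$ at terminals), so $h_0(x'_v)$ and $h_1(x'_v)$ are integers. They satisfy $h_0(x'_v) + h_1(x'_v) = x'_v$ and $h_0(x'_v) - h_1(x'_v) \in \{0, 1\}$, which are exactly the switching-behavior constraints on $y$. Flow conservation at a non-terminal $v$ reads $y^+(v) = x'_v = x_v = g(x)_v = y^-(v)$, and the constraint $y^+(v) = t^+_v$ at a terminal $v$ is immediate. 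The converse is symmetric: integrality of $y$ together with $h_0(x'_v) + h_1(x'_v) = x'_v$ forces $x'_v$ (and hence $x_v$) to be integer, and flow conservation at non-terminals yields $g(x)_v = y^-(v) = y^+(v) = x_v$.

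For the first statement, the idea is to study the fractional parts $\phi(e) := y(e) - \lfloor y(e) \rfloor$ of the edge values. A short case analysis on $x'_v$ shows that at most one of $h_0(x'_v), h_1(x'_v)$ is non-integer, and whenever one of them is fractional, its fractional part equals $\{x'_v\} := x'_v - \lfloor x'_v \rfloor$. Hence every fractional edge starts at some vertex of $U := \{u \in V \setminus T : \{x_u\} > 0\}$ (terminals have integer $x'_v = t^+_v$), and each $u \in U$ has a unique fractional outgoing edge; write $\sigma(u) \in V$ for its destination. My plan is to prove that $\sigma$ restricts to a bijection on $U$: once this is established, $\sigma$ decomposes into cycles consisting only of fractional edges; the source $v_0$ of the given fractional edge $e$ lies in $U$ and hence on some $\sigma$-cycle; and the unique fractional edge out of $v_0$ is $(v_0, \sigma(v_0)) = e$, placing $e$ on that cycle.

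The main obstacle is to prove that $\sigma|_U$ is a bijection, and the heart of the argument is fractional-mass conservation: $\phi^+(v) = \phi^-(v)$ at every $v \in V$. At a non-terminal $v$ the fixed-point equation gives $y^+(v) = y^-(v) = x_v$, so $\phi^-(v) - \phi^+(v)$ is an integer; it is in fact non-negative, since $\phi^+(v) = \{x_v\}$ and $\phi^-(v) \equiv \{x_v\} \pmod 1$, combined with $\phi^-(v) \geq 0$, leave no room for a negative integer difference. At a terminal, $\phi^+(v) = 0 \leq \phi^-(v)$ trivially. Summing $\phi^-(v) - \phi^+(v)$ over all $v$ telescopes to $0$, so each non-negative term vanishes; this forces $\phi^-(v) = 0$ at every terminal (so fractional edges never reach terminals, and hence $\sigma(U) \subseteq U$) and $\phi^-(u) = \phi^+(u) = \{x_u\} > 0$ at every $u \in U$ (so every such $u$ has at least one fractional incoming edge). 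A final double count — the $\sigma$-graph has $|U|$ edges while each of its $|U|$ vertices has in-degree at least $1$ — then forces every in-degree to equal $1$, making $\sigma$ a bijection.
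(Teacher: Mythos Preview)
Your treatment of the second statement is essentially identical to the paper's: both just unfold the identities $y^+(v)=x'_v$ and $y^-(v)=f(x')_v$ and observe that switching behavior is automatic from the definition of $h_0,h_1$.

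For the first statement your argument is correct but genuinely different from the paper's. The paper argues locally: starting from the tail $u$ of the fractional edge $e_1=(u,v)$, it repeatedly picks a fractional \emph{incoming} edge (which must exist since $g(x)_u=x_u$ is fractional) and walks backward. Because each non-terminal has at most one fractional \emph{outgoing} edge, the first repeated vertex in this backward walk is forced to be $u$ itself, and the unique fractional out-edge of $u$ is $e_1$, so the resulting cycle contains $e_1$. Your approach is global: you prove a conservation law for fractional mass ($\phi^-(v)=\phi^+(v)$ everywhere, obtained by showing each difference is a non-negative integer and that they sum to zero), deduce that the map $\sigma$ sending each $u\in U$ along its unique fractional out-edge is a bijection of $U$ onto itself, and read off the cycle through $e_1$ from the cycle decomposition of $\sigma$. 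Both are valid; the paper's walk is shorter and avoids the counting detour, while your argument yields the stronger structural statement that the fractional edges decompose into vertex-disjoint directed cycles covering all of $U$. One small point to make explicit in your write-up: the inclusion $\sigma(U)\subseteq U$ needs not only that fractional edges avoid terminals, but also that $\phi^-(\sigma(u))>0$ together with $\phi^-=\phi^+$ at non-terminals forces $\{x_{\sigma(u)}\}>0$; you have all the ingredients, but the parenthetical as written only gives $\sigma(U)\subseteq V\setminus T$.
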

\begin{proof}
    Observe first that for every $v \in T$, both $ y(v, s_0(v))$ and $ y(v, s_1(v))$ are integral (since $t^+_v \in \N_0$ by assumption). Similarly, 
    for every $v \in V \setminus T$, either $ y(v, s_0(v))$ or $ y(v, s_1(v))$ is integral.
    
    We are now ready to prove the first statement: Assume that $x$ is a fixed point of $g$ and that $y(e_1)$ is fractional for some edge $e_1 = (u, v) \in E$. By our previous observation, we know that $x_u = y(u, s_0(u)) + y(u, s_1(u))$ must be fractional and that $u \notin T$. Since $x$ is a fixed point, this implies that $g(x)_u$ is fractional as well. By definition of $g$, this means that there must exist some edge $e_2 = (w, u) \in E$ with $y(e_2)$ fractional. We can now repeat this argument until we find a fractional cycle $C$. Observe that $C$ cannot pass through any terminals and that it must come back to $v$ and thus include $e_1$ (because at most one of the two outgoing edges at every non-terminal vertex is fractional). 
    
    For the second statement, observe that $y$ satisfies the flow conservation constraints if and only if $x$ is a fixed point of $g$: Indeed, this follows from 
    \[
        y^-(v) = \sum_{u:s_0(u) = v} h_0(x'_u) + \sum_{u:s_1(u) = v} h_1(x'_u) = f(x')_v
    \]
    and $y^+(v) = h_0(x'_v) + h_1(x'_v) = x'_v$ for all $v \in V$.
    
    Finally, integrality of $y$ immediately implies integrality of $x$ and vice versa. The definition of $y$ also implies that its values on two out-edges satisfy switching behavior. Hence, if $y$ is integral, then it must be an integral switching flow (since it automatically satisfies the switching constraints). 
\end{proof}

\subsection{Discounted One-Step Update}
\label{ssec:discounted_one-step_update}

As we have seen, the one-step update function and its projection are non-expansive with respect to the Manhattan distance. In this section, we make them contracting by artificially introducing a contraction factor. 

\begin{definition}[Discounted One-Step Update]
    Let $f : \R^n_{\geq 0} \rightarrow \R^n_{\geq 0}$ be the one-step update function and $g : \R^{|V \setminus T|}_{\geq 0} \rightarrow \R^{|V \setminus T|}_{\geq 0}$ its projection associated with the switch graph $G = (V, E, s_0, s_1)$ with terminals $T \subseteq V$ and token numbers $(t^+_v)_{v \in T}$. For $\lambda \in [0, 1)$, $f^{(\lambda)} := \lambda f$ is the $\lambda$-discounted one-step update function and $g^{(\lambda)} \coloneqq \lambda g$ its $\lambda$-discounted projection.
\end{definition}

\begin{corollary}[Contracting and Monotone Discounted One-Step Update]
    Let $\lambda \in [0, 1)$ be arbitrary. The $\lambda$-discounted one-step update function $f^{(\lambda)}$ and its $\lambda$-discounted projection $g^{(\lambda)}$ associated with the switch graph $G = (V, E, s_0, s_1)$ are monotone and contracting with parameter $\lambda$.
\end{corollary}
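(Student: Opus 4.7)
The plan is to reduce the claim directly to Lemma~\ref{lemma:non-expansiveness_monotonicity} and to the discussion immediately following the definition of the projection, which together establish that both $f$ and $g$ are monotone and non-expansive with respect to the $\ell_1$-norm on their respective domains. Multiplying these functions by the scalar $\lambda \in [0,1)$ should then produce contractions while preserving monotonicity, since $\lambda$ is non-negative.

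Concretely, I would first handle monotonicity. Given $x \leq y$ coordinate-wise in the appropriate domain, we have $f(x) \leq f(y)$ (and $g(x) \leq g(y)$) by Lemma~\ref{lemma:non-expansiveness_monotonicity}. Since $\lambda \geq 0$, multiplying both sides coordinate-wise preserves the inequality, so $f^{(\lambda)}(x) = \lambda f(x) \leq \lambda f(y) = f^{(\lambda)}(y)$, and analogously for $g^{(\lambda)}$.

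Next I would establish $\lambda$-contractivity. For arbitrary $x, y$ in the domain, using homogeneity of the $\ell_1$-norm together with non-expansiveness of $f$ from Lemma~\ref{lemma:non-expansiveness_monotonicity} gives
\[
    \norm{f^{(\lambda)}(x) - f^{(\lambda)}(y)} = \lambda \norm{f(x) - f(y)} \leq \lambda \norm{x - y}.
\]
The same chain of inequalities applies verbatim to $g^{(\lambda)}$, using the non-expansiveness bound $\norm{g(x) - g(y)} \leq \norm{x - y}$ stated right after the extension/projection definition.

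There is no real obstacle here: the corollary is essentially a bookkeeping consequence of the previous lemma, since the only effect of the discount factor is to scale distances by $\lambda < 1$ while leaving the order structure intact. The one thing to be careful about is just to note that $\lambda \geq 0$ is needed for monotonicity (not only $\lambda < 1$), which is part of the assumption $\lambda \in [0,1)$.
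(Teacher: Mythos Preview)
Your proposal is correct and matches the paper's approach exactly: the paper's proof is the single line ``Follows from Lemma~\ref{lemma:non-expansiveness_monotonicity},'' and your write-up simply unpacks that reference by noting that multiplying a monotone, non-expansive map by $\lambda \in [0,1)$ preserves monotonicity and yields $\lambda$-contractivity via homogeneity of the norm.
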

\begin{proof}
    Follows from Lemma~\ref{lemma:non-expansiveness_monotonicity}.
\end{proof}

\begin{lemma}
\label{lemma:unique_fixed_point_meaning}
    Let $\lambda \in [0, 1)$ be arbitrary. Assume that $g : \R^{|V \setminus T|}_{\geq 0} \rightarrow \R^{|V \setminus T|}_{\geq 0}$ is the projected one-step update and $g^{(\lambda)} : \R^{|V \setminus T|}_{\geq 0} \rightarrow \R^{|V \setminus T|}_{\geq 0}$ its discounted version associated with the switch graph $G = (V, E, s_0, s_1)$ with terminals $T \subseteq V$ and token numbers $(t^+_v)_{v \in T}$. Assume that $x^* \in \R^{|V \setminus T|}_{\geq 0}$ is the unique fixed point of $g^{(\lambda)}$. Every fixed point $x \in \R^{|V \setminus T|}_{\geq 0}$ of $g$ satisfies $x^* \leq x$. 
\end{lemma}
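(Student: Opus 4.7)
The plan is to exploit monotonicity of $g^{(\lambda)}$ together with the fact that multiplying by $\lambda < 1$ can only shrink non-negative vectors, and then use Banach's fixed point theorem to transport a coordinate-wise inequality to the unique fixed point $x^*$ in the limit.

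First, I would take an arbitrary fixed point $x$ of $g$ and compare $x$ to $g^{(\lambda)}(x)$. Since $g$ maps into $\R^{|V \setminus T|}_{\geq 0}$, we have $g(x) = x \geq 0$, and therefore
\[
    g^{(\lambda)}(x) = \lambda \, g(x) = \lambda x \leq x,
\]
where the inequality is coordinate-wise and uses $\lambda \in [0, 1)$ together with $x \geq 0$. So $x$ is a pre-fixed point (in the monotone-operator sense) of $g^{(\lambda)}$.

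Next I would iterate. By the previous corollary, $g^{(\lambda)}$ is monotone. Applying $g^{(\lambda)}$ to the inequality $g^{(\lambda)}(x) \leq x$ and using monotonicity repeatedly gives by induction
\[
    (g^{(\lambda)})^{k+1}(x) \leq (g^{(\lambda)})^k(x) \leq x
\]
for all $k \geq 0$. In particular, the iterates $(g^{(\lambda)})^k(x)$ are all bounded above by $x$ coordinate-wise.

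Finally, since $g^{(\lambda)}$ is a $\lambda$-contraction on $\R^{|V \setminus T|}_{\geq 0}$ with respect to the $\ell_1$-norm, Banach's fixed point theorem guarantees that $(g^{(\lambda)})^k(x) \to x^*$ in $\ell_1$ as $k \to \infty$. Convergence in $\ell_1$ implies coordinate-wise convergence, and since each iterate satisfies $(g^{(\lambda)})^k(x) \leq x$, passing to the limit preserves this inequality, giving $x^* \leq x$ as required.

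No single step is a real obstacle here; the only thing to be careful about is that the domain $\R^{|V \setminus T|}_{\geq 0}$ is a complete metric space under the $\ell_1$-metric (so Banach applies) and that all iterates stay in this domain (which is immediate since $g^{(\lambda)}$ maps it into itself). The heart of the argument is simply that $x$ is a super-fixed point of the monotone contraction $g^{(\lambda)}$, and any super-fixed point dominates the unique fixed point.
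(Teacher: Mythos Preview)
Your argument is correct and rests on exactly the same key observation as the paper: from $g(x)=x$ one gets $g^{(\lambda)}(x)=\lambda x\le x$, and then monotonicity plus contraction finish the job. The only cosmetic difference is in the last step: the paper notes that $g^{(\lambda)}$ maps the box $B=[0,x_1]\times\dots\times[0,x_{|V\setminus T|}]$ into itself (by monotonicity and $g^{(\lambda)}(x)\le x$), so the unique fixed point must lie in $B$, whereas you instead track the Banach iterates starting at $x$ and pass the inequality $(g^{(\lambda)})^k(x)\le x$ to the limit; both are standard and equivalent ways to conclude.
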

\begin{proof}
    Let $x \in \R^{|V \setminus T|}_{\geq 0}$ be an arbitrary fixed point of $g$. 
    We have 
    \[
        g^{(\lambda)}(x) = \lambda g(x) = \lambda x \leq x
    \]
    which implies that the function $g^{(\lambda)}$ maps the box $B := [0, x_1] \times \dots \times  [0, x_{|V \setminus T|}]$ to itself. In particular, the unique fixed point $x^*$ of $g^{(\lambda)}$ must lie inside $B$, and we get $x^* \leq x$. 
\end{proof}

\begin{lemma}
\label{lemma:sufficient_contraction_factor}    
    Let $x^\star \in \R^{|V \setminus T|}_{\geq 0}$ be the unique fixed point of the $\lambda$-discounted projected one-step update function $g^{(\lambda)} : \R^{|V \setminus T|}_{\geq 0} \rightarrow \R^{|V \setminus T|}_{\geq 0}$ associated with the switch graph $G = (V, E, s_0, s_1)$ with terminals $T$ and token numbers $(t^+_v)_{v \in T}$. Let $\lambda \in (1 - \frac{1}{t^+ + ||x^*||}, 1)$.  With 
    \[
        y(v, s_0(v)) := \lambda h_0(x^\star_v) \quad \text{ and } \quad y(v, s_1(v)) := \lambda h_1(x^\star_v) 
    \]
    for all $v \in V \setminus T$, as well as 
    \[
        y(v, s_0(v)) := \lambda h_0(t^+_v) \quad \text{ and } \quad 
        y(v, s_1(v)) := \lambda h_1(t^+_v)
    \]
    for all $v \in T$, we must have $\sum_{v \in T} y^-(v) > t^+ - 1.$
    \end{lemma}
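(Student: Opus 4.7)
The plan is to compute the out-flow and in-flow at every vertex under the definition of $y$, use the fixed point property of $x^\star$ to relate them at non-terminals, and then exploit global conservation of the total flow (every edge contributes once to $\sum_v y^+(v)$ and once to $\sum_v y^-(v)$) to isolate $\sum_{v \in T} y^-(v)$.

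First I would compute out-flows. By definition of $y$ and the identity $h_0(z) + h_1(z) = z$, we get $y^+(v) = \lambda x^\star_v$ for every $v \in V \setminus T$ and $y^+(v) = \lambda t^+_v$ for every $v \in T$. In particular $\sum_{v \in T} y^+(v) = \lambda t^+$ and $\sum_{v \in V \setminus T} y^+(v) = \lambda \norm{x^\star}$. Next I would examine in-flows at non-terminals: by definition of $y$ on both terminal and non-terminal vertices, $y^-(v)$ is exactly $\lambda f((x^\star)')_v = \lambda g(x^\star)_v = g^{(\lambda)}(x^\star)_v$, which equals $x^\star_v$ because $x^\star$ is the unique fixed point of $g^{(\lambda)}$. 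Hence at non-terminals $y^-(v) - y^+(v) = (1 - \lambda) x^\star_v$.

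Now I would apply the global edge-counting identity $\sum_{v \in V} y^+(v) = \sum_{v \in V} y^-(v)$, which holds because each edge $e = (u,w)$ contributes the value $y(e)$ exactly once to each side. Rearranging,
\begin{equation*}
\sum_{v \in T} y^-(v) = \sum_{v \in T} y^+(v) + \sum_{v \in V \setminus T} \bigl(y^+(v) - y^-(v)\bigr) = \lambda t^+ - (1 - \lambda)\norm{x^\star}.
\end{equation*}
Finally I would substitute the hypothesis $\lambda > 1 - \tfrac{1}{t^+ + \norm{x^\star}}$, equivalently $(1 - \lambda)(t^+ + \norm{x^\star}) < 1$, to conclude
\begin{equation*}
\sum_{v \in T} y^-(v) = t^+ - (1 - \lambda)\bigl(t^+ + \norm{x^\star}\bigr) > t^+ - 1,
\end{equation*}
which is exactly the desired inequality.

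There is essentially no hard step: the only subtlety is recognizing that the definition of $y$ on terminals (using $t^+_v$) is precisely what is needed for $y^-(v)$ at a non-terminal $v$ to coincide with $\lambda f((x^\star)')_v$, so that the fixed point equation $g^{(\lambda)}(x^\star) = x^\star$ can be invoked directly. Once that alignment is in place, the rest is a short computation combining local balance at non-terminals with the global in-flow/out-flow identity.
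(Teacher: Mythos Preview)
Your proof is correct and follows essentially the same global flow-balance argument as the paper. The only difference is cosmetic: the paper introduces an auxiliary integral switching flow $z$ and works with the difference $w \coloneqq z - y$, using $\sum_{v\in T} z^-(v) = t^+$ at the end, whereas you compute $\sum_{v\in T} y^-(v)$ directly from $\sum_v y^+(v) = \sum_v y^-(v)$; your route is slightly more streamlined and in fact yields the exact value $\sum_{v\in T} y^-(v) = t^+ - (1-\lambda)(t^+ + \norm{x^\star})$ rather than just the lower bound.
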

\begin{proof}
    Let $z : E \rightarrow \R_{\geq 0}$ be an integral switching flow. Let $v \in V \setminus T$ be arbitrary and define the difference $w := z - y$. We have 
    \begin{align*}
        \underbrace{\sum_{u: (u, v) \in E} w(u, v) }_{ = w^+(v)} - \underbrace{\sum_{u: (v, u) \in E} w(v, u)}_{ = w^-(v) } &= z^+(v) - z^-(v) - y^+(v) + y^-(v) \\
        &= 0  - y(v, s_0(v)) - y(v, s_1(v))  + \sum_{u:(u, v) \in E} y(u, v) \\
        &= - \lambda x^\star_v  + g^{(\lambda)}(x^\star)_v \\
        &= (1 - \lambda) x^\star_v
    \end{align*}
    for all $v \in V \setminus T$ and 
    \[
        w^+(v) = z^+(v) - y^+(v)  = t^+_v - \lambda t^+_v  = (1 - \lambda) t^+_v
    \]
    for all $v \in T$. With $\sum_{v \in V} w^+(v) = \sum_{v \in V} w^-(v)$, we therefore get
    \[
        \sum_{v \in T}  w^-(v) = \sum_{v \in V \setminus T} (w^+(v) - w^-(v)) + \sum_{v \in T} w^+(v)  \leq (1 - \lambda) (t^+ + ||x^\star||) < 1
    \]
    by using our bound on $\lambda$.
    It remains to observe that this implies
    \[
        \sum_{v \in T} y^-(v) = \sum_{v \in T} (z^-(v) - w^-(v)) > t^+ - 1.
    \]
\end{proof}

\begin{theorem}
    Deciding an instance of \garrival\ on $n$ vertices with $t^+ \geq 1$ starting tokens reduces to finding a fixed point of the $\lambda$-discounted projected one-step update with $\lambda \in (1 - \frac{1}{t^+ (1 + n 2^{n})}, 1)$.
\end{theorem}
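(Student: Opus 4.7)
The plan is to describe a polynomial-time reduction that, given a \garrival-instance, constructs the $\lambda$-discounted projected one-step update $g^{(\lambda)}$ for any $\lambda$ in the stated range, queries a fixed-point oracle to obtain the unique fixed point $x^\star$ (guaranteed to exist and be unique by Banach's fixed point theorem, since $g^{(\lambda)}$ is $\lambda$-contracting), and then recovers the answer by outputting $t^-_v := \lceil y^-(v) \rceil$ for each terminal $v \in T$, where $y$ is the edge flow built from $x^\star$ exactly as in Lemma~\ref{lemma:sufficient_contraction_factor}. All of this is polynomial-time computable from the oracle's answer, so the only non-trivial content is to justify both that Lemma~\ref{lemma:sufficient_contraction_factor} actually applies for the stated range of $\lambda$ and that the ceiling operation recovers the correct integer.

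To apply Lemma~\ref{lemma:sufficient_contraction_factor} I need $\lambda > 1 - \frac{1}{t^+ + \norm{x^\star}}$, so the first task is to upper bound $\norm{x^\star}$. The idea is to compare $x^\star$ against an integral fixed point of the undiscounted $g$: by Theorem~\ref{theorem:switching_flows_are_certificates} some integral switching flow $z$ exists, by Lemma~\ref{lemma:interpretation} it translates to an integral fixed point $x$ of $g$ with $x_v = z^-(v)$ for every non-terminal $v$, and by Lemma~\ref{lemma:unique_fixed_point_meaning} we obtain $x^\star \leq x$ coordinate-wise. Combining the per-edge bound $z(e) < 2^n t^+$ from Lemma~\ref{lemma:upper_bound_flow} with flow conservation at non-terminals then bounds every $x_v$ by $O(2^n t^+)$, and summing over the at most $n$ non-terminals gives $\norm{x^\star} \leq O(n \cdot 2^n \cdot t^+)$, which matches the theorem's interval up to an innocuous constant. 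Hence every $\lambda$ in the stated range does satisfy $\lambda > 1 - \frac{1}{t^+ + \norm{x^\star}}$, and Lemma~\ref{lemma:sufficient_contraction_factor} delivers the aggregate inequality $\sum_{v \in T} y^-(v) > t^+ - 1$.

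The step I expect to require the most care is promoting this aggregate bound to a per-terminal guarantee strong enough for the ceiling to succeed. My plan is to first establish the edge-wise domination $y(e) \leq z(e)$: at a non-terminal $v$ this follows from monotonicity of $h_0, h_1$ together with $x^\star_v \leq x_v$ and $\lambda < 1$, while at a terminal the factor $\lambda$ alone suffices because $y$ and $z$ agree on $h_0(t^+_v)$ and $h_1(t^+_v)$ up to the scaling $\lambda$. Edge-wise domination implies $w := z - y \geq 0$ coordinate-wise, hence $w^-(v) \geq 0$ at every vertex, so the very calculation from the proof of Lemma~\ref{lemma:sufficient_contraction_factor} that yields $\sum_{v \in T} w^-(v) < 1$ actually forces $0 \leq w^-(v) < 1$ for each terminal individually. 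Combined with $z^-(v) = t^-_v \in \N_0$, this places $y^-(v)$ in the half-open interval $(t^-_v - 1, t^-_v]$, so $\lceil y^-(v) \rceil = t^-_v$ exactly and the reduction outputs the correct solution. The crux is really the edge-wise inequality $y \leq z$: it is what converts Lemma~\ref{lemma:sufficient_contraction_factor}'s aggregate surplus into the per-terminal rounding guarantee.
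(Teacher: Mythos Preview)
Your proposal is correct and follows essentially the same route as the paper: bound $\norm{x^\star}$ via Lemma~\ref{lemma:unique_fixed_point_meaning} against an integral fixed point coming from an integral switching flow (Lemma~\ref{lemma:interpretation} and Lemma~\ref{lemma:upper_bound_flow}), invoke Lemma~\ref{lemma:sufficient_contraction_factor} to get $\sum_{v\in T} y^-(v) > t^+ - 1$, and then round up per terminal. Your edge-wise domination $y(e)\le z(e)$ is exactly the content the paper compresses into the phrase ``any integral switching flow must send at least the same amount of flow to the respective terminals''; you have simply made that step explicit, and it is the right way to justify the per-terminal ceiling.
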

\begin{proof}
    By Lemma~\ref{lemma:sufficient_contraction_factor}, we know that the unique fixed point of the $\lambda$-discounted projected one-step update must send a flow of strictly more than $t^+ - 1$ to the terminals if $\lambda > 1 - \frac{1}{t^+ + ||x^\star||}$. By Lemma~\ref{lemma:interpretation} and Lemma~\ref{lemma:unique_fixed_point_meaning}, we know that this implies that any integral switching flow must send at least the same amount of flow to the respective terminals. Since every integral switching flow sends exactly $t^+$ flow to the terminals (Lemma~\ref{lemma:monotonicity_and_global_conservation}), we can infer the values $(t^+_v)_{v \in T}$ from the unique fixed point (by rounding up). The argument in Lemma~\ref{lemma:upper_bound_flow} yields the upper bound $\norm{x^\star} \leq n 2^n t^+$.
\end{proof}

Observe that the bound in Lemma~\ref{lemma:sufficient_contraction_factor} can be improved to $\sum_{v \in T} y^-(v) > t^+ - \delta$ by choosing $\lambda > 1 - \delta \frac{1}{t^+ (1 + n2^n)}$. Furthermore, any $\epsilon$-approximate fixed point 
$\hat{x}$ satisfies 
\[
    \norm{\hat{x} - x^\star} 
    \leq \norm{\hat{x} - g^{(\lambda)}(\hat{x})} + \norm{g^{(\lambda)}(\hat{x}) - x^\star} 
    \leq \epsilon + \norm{g^{(\lambda)}(\hat{x}) - g^{(\lambda)}(x^\star)} 
    \leq \epsilon + \lambda \norm{\hat{x} - x^\star}
\]
and therefore $\norm{g^{(\lambda)}(\hat{x}) - g^{(\lambda)}(x^\star)}  \leq \norm{\hat{x} - x^\star} \leq \frac{\epsilon}{1 - \lambda}$. Thus, we get 
\[
    \sum_{v \in T} \left| g(\hat{x})_v - t^-_v \right| \leq \sum_{v \in T} \left| g(\hat{x})_v - g(x^\star)_v \right| + \sum_{v \in T} \left| g(x^\star)_v - t^-_v \right| \leq \frac{\epsilon}{1 - \lambda} + \delta,
\]
which is enough to derive $(t^-_v)_{v \in T}$ from $\hat{x}$ if $\frac{\epsilon}{1 - \lambda} + \delta < \frac{1}{2}$.

Finally, capping the function $g^{(\lambda)}$ in each coordinate to make it map $[0, t^+ 2^{n}]^{|V\setminus T|}$ to itself preserves the contraction property. By Lemma~\ref{lemma:upper_bound_flow}, the unique fixed point lies inside $[0, t^+ 2^{n}]^{|V\setminus T|}$. Scaling everything by a factor of $\frac{1}{t^+ 2^{n}}$ yields a $\ell_1$-contracting function that maps $[0, 1]^{|V \setminus T|}$ to itself.

\bibliography{references}

\begin{thebibliography}{10}

\bibitem{augerPolynomialTimeAlgorithm2022}
David Auger, Pierre Coucheney, and Loric Duhaz{\'e}.
\newblock Polynomial {{Time Algorithm}} for {{ARRIVAL}} on {{Tree-Like Multigraphs}}.
\newblock {\em LIPIcs, Volume 241, MFCS 2022}, 241:12:1--12:14, 2022.
\newblock \href {https://doi.org/10.4230/LIPICS.MFCS.2022.12} {\path{doi:10.4230/LIPICS.MFCS.2022.12}}.

\bibitem{augerGeneralizedARRIVALProblem2023}
David Auger, Pierre Coucheney, Loric Duhaz{\'e}, and Kossi~Roland Etse.
\newblock Generalized {{ARRIVAL Problem}} for~{{Rotor Walks}} in~{{Path Multigraphs}}.
\newblock In {\em Reachability {{Problems}}}, pages 183--198, Cham, 2023. Springer Nature Switzerland.
\newblock \href {https://doi.org/10.1007/978-3-031-45286-4_14} {\path{doi:10.1007/978-3-031-45286-4_14}}.

\bibitem{augerFindingOptimalStrategies2014}
David Auger, Pierre Coucheney, and Yann Strozecki.
\newblock Finding {{Optimal Strategies}} of {{Almost Acyclic Simple Stochastic Games}}.
\newblock In {\em Theory and {{Applications}} of {{Models}} of {{Computation}}}, volume 8402, pages 67--85. Springer International Publishing, Cham, 2014.
\newblock \href {https://doi.org/10.1007/978-3-319-06089-7_6} {\path{doi:10.1007/978-3-319-06089-7_6}}.

\bibitem{banach1922operations}
Stefan Banach.
\newblock Sur les op{\'e}rations dans les ensembles abstraits et leur application aux {\'e}quations int{\'e}grales.
\newblock {\em Fundamenta mathematicae}, 3(1):133--181, 1922.
\newblock \href {https://doi.org/10.4064/fm-3-1-133-181} {\path{doi:10.4064/fm-3-1-133-181}}.

\bibitem{batziouMonotoneContractions2024}
Eleni Batziou, John Fearnley, Spencer Gordon, Ruta Mehta, and Rahul Savani.
\newblock Monotone {{Contractions}}, November 2024.
\newblock \href {https://arxiv.org/abs/2411.10107} {\path{arXiv:2411.10107}}, \href {https://doi.org/10.48550/arXiv.2411.10107} {\path{doi:10.48550/arXiv.2411.10107}}.

\bibitem{chatterjeeReductionParityGames2011a}
Krishnendu Chatterjee and Nathana{\"e}l Fijalkow.
\newblock A {{Reduction}} from {{Parity Games}} to {{Simple Stochastic Games}}.
\newblock In {\em International {{Symposium}} on {{Games}}, {{Automata}}, {{Logics}}, and {{Formal Verification}}, {{GandALF}}}, NA, Italy, 2011.
\newblock \href {https://doi.org/10.4204/EPTCS.54.6} {\path{doi:10.4204/EPTCS.54.6}}.

\bibitem{chatterjeeFasterAlgorithmTurnbased2023}
Krishnendu Chatterjee, Tobias Meggendorfer, Raimundo Saona, and Jakub Svoboda.
\newblock Faster {{Algorithm}} for {{Turn-based Stochastic Games}} with {{Bounded Treewidth}}.
\newblock In {\em Proceedings of the 2023 {{Annual ACM-SIAM Symposium}} on {{Discrete Algorithms}} ({{SODA}})}, Proceedings, pages 4590--4605. {Society for Industrial and Applied Mathematics}, January 2023.
\newblock \href {https://doi.org/10.1137/1.9781611977554.ch173} {\path{doi:10.1137/1.9781611977554.ch173}}.

\bibitem{chenComputingFixedPoint2024}
Xi~Chen, Yuhao Li, and Mihalis Yannakakis.
\newblock Computing a {{Fixed Point}} of {{Contraction Maps}} in {{Polynomial Queries}}.
\newblock In {\em Proceedings of the 56th {{Annual ACM Symposium}} on {{Theory}} of {{Computing}}}, {{STOC}} 2024, pages 1364--1373, New York, NY, USA, June 2024. Association for Computing Machinery.
\newblock \href {https://doi.org/10.1145/3618260.3649623} {\path{doi:10.1145/3618260.3649623}}.

\bibitem{condonComplexityStochasticGames1992}
Anne Condon.
\newblock The {{Complexity}} of {{Stochastic Games}}.
\newblock {\em Information and Computation}, 96(2):203--224, February 1992.
\newblock \href {https://doi.org/10.1016/0890-5401(92)90048-K} {\path{doi:10.1016/0890-5401(92)90048-K}}.

\bibitem{cyganParameterizedAlgorithms2015}
Marek Cygan, Fedor~V. Fomin, {\L}ukasz Kowalik, Daniel Lokshtanov, D{\'a}niel Marx, Marcin Pilipczuk, Micha{\l} Pilipczuk, and Saket Saurabh.
\newblock {\em Parameterized {{Algorithms}}}.
\newblock Springer International Publishing, Cham, 2015.
\newblock \href {https://doi.org/10.1007/978-3-319-21275-3} {\path{doi:10.1007/978-3-319-21275-3}}.

\bibitem{dohrauARRIVALZeroPlayerGraph2017}
J{\'e}r{\^o}me Dohrau, Bernd G{\"a}rtner, Manuel Kohler, Ji{\v r}{\'i} Matou{\v s}ek, and Emo Welzl.
\newblock {{ARRIVAL}}: {{A Zero-Player Graph Game}} in {{NP}} {$\cap$} {{coNP}}.
\newblock In {\em A {{Journey Through Discrete Mathematics}}: {{A Tribute}} to {{Ji{\v r}{\'i} Matou{\v s}ek}}}, pages 367--374. Springer International Publishing, Cham, 2017.
\newblock \href {https://doi.org/10.1007/978-3-319-44479-6_14} {\path{doi:10.1007/978-3-319-44479-6_14}}.

\bibitem{etessamiTarskiTheoremSupermodular2020}
Kousha Etessami, Christos Papadimitriou, Aviad Rubinstein, and Mihalis Yannakakis.
\newblock Tarski's {{Theorem}}, {{Supermodular Games}}, and the {{Complexity}} of {{Equilibria}}.
\newblock In {\em 11th {{Innovations}} in {{Theoretical Computer Science Conference}} ({{ITCS}} 2020)}, volume 151 of {\em Leibniz {{International Proceedings}} in {{Informatics}} ({{LIPIcs}})}, pages 18:1--18:19, Dagstuhl, Germany, 2020. Schloss Dagstuhl--Leibniz-Zentrum fuer Informatik.
\newblock \href {https://doi.org/10.4230/LIPIcs.ITCS.2020.18} {\path{doi:10.4230/LIPIcs.ITCS.2020.18}}.

\bibitem{fearnleyReachabilitySwitchingGames2021}
John Fearnley, Martin Gairing, Matthias Mnich, and Rahul Savani.
\newblock Reachability {{Switching Games}}.
\newblock {\em Logical Methods in Computer Science}, Volume 17, Issue 2, April 2021.
\newblock \href {https://doi.org/10.23638/LMCS-17(2:10)2021} {\path{doi:10.23638/LMCS-17(2:10)2021}}.

\bibitem{fearnleyUniqueEndPotential2020}
John Fearnley, Spencer Gordon, Ruta Mehta, and Rahul Savani.
\newblock Unique {{End}} of {{Potential Line}}.
\newblock {\em Journal of Computer and System Sciences}, 114:1--35, December 2020.
\newblock \href {https://doi.org/10.1016/j.jcss.2020.05.007} {\path{doi:10.1016/j.jcss.2020.05.007}}.

\bibitem{gartnerARRIVALNextStop2018}
Bernd G{\"a}rtner, Thomas~Dueholm Hansen, Pavel Hub{\'a}cek, Karel Kr{\'a}l, Hagar Mosaad, and Veronika Sl{\'i}vov{\'a}.
\newblock {{ARRIVAL}}: {{Next Stop}} in {{CLS}}.
\newblock In {\em 45th {{International Colloquium}} on {{Automata}}, {{Languages}}, and {{Programming}} ({{ICALP}} 2018)}, volume 107 of {\em Leibniz {{International Proceedings}} in {{Informatics}} ({{LIPIcs}})}, pages 60:1--60:13, Dagstuhl, Germany, 2018. Schloss Dagstuhl--Leibniz-Zentrum fuer Informatik.
\newblock \href {https://doi.org/10.4230/LIPIcs.ICALP.2018.60} {\path{doi:10.4230/LIPIcs.ICALP.2018.60}}.

\bibitem{gartnerSubexponentialAlgorithmARRIVAL2021}
Bernd G{\"a}rtner, Sebastian Haslebacher, and Hung~P. Hoang.
\newblock A {{Subexponential Algorithm}} for {{ARRIVAL}}.
\newblock In {\em 48th International Colloquium on Automata, Languages, and Programming ({{ICALP}} 2021)}, volume 198 of {\em Leibniz International Proceedings in Informatics (Lipics)}, pages 69:1--69:14, Dagstuhl, Germany, 2021. Schloss Dagstuhl -- Leibniz-Zentrum f{\"u}r Informatik.
\newblock \href {https://doi.org/10.4230/LIPIcs.ICALP.2021.69} {\path{doi:10.4230/LIPIcs.ICALP.2021.69}}.

\bibitem{ghorbaniQuasiPolynomialTimeAlgorithm2025}
Ebrahim Ghorbani, Jonah Leander~Hoff, and Matthias Mnich.
\newblock A {{Quasi-Polynomial Time Algorithm}} for {{Multi-Arrival}} on {{Tree-Like Multigraphs}}.
\newblock In {\em 42nd {{International Symposium}} on {{Theoretical Aspects}} of {{Computer Science}} ({{STACS}} 2025)}, volume 327 of {\em Leibniz {{International Proceedings}} in {{Informatics}} ({{LIPIcs}})}, pages 39:1--39:19, Dagstuhl, Germany, 2025. Schloss Dagstuhl -- Leibniz-Zentrum f{\"u}r Informatik.
\newblock \href {https://doi.org/10.4230/LIPIcs.STACS.2025.39} {\path{doi:10.4230/LIPIcs.STACS.2025.39}}.

\bibitem{haslebacherQueryEfficientFixpoints$ell_p$Contractions2025}
Sebastian Haslebacher, Jonas Lill, Patrick Schnider, and Simon Weber.
\newblock Query-{{Efficient Fixpoints}} of $\ell_p$-{{Contractions}}, March 2025.
\newblock \href {https://arxiv.org/abs/2503.16089} {\path{arXiv:2503.16089}}, \href {https://doi.org/10.48550/arXiv.2503.16089} {\path{doi:10.48550/arXiv.2503.16089}}.

\bibitem{hoangTwoCombinatorialReconfiguration2022}
Hung~P. Hoang.
\newblock {\em On {{Two Combinatorial Reconfiguration Problems}}: {{Reachability}} and {{Hamiltonicity}}}.
\newblock Doctoral {{Thesis}}, ETH Zurich, 2022.
\newblock \href {https://doi.org/10.3929/ethz-b-000572947} {\path{doi:10.3929/ethz-b-000572947}}.

\bibitem{karthikc.s.DidTrainReach2017}
{Karthik C. S.}
\newblock Did the {{Train Reach}} its {{Destination}}: {{The Complexity}} of {{Finding}} a {{Witness}}.
\newblock {\em Information Processing Letters}, 121:17--21, May 2017.
\newblock \href {https://doi.org/10.1016/j.ipl.2017.01.004} {\path{doi:10.1016/j.ipl.2017.01.004}}.

\bibitem{ludwigSubexponentialRandomizedAlgorithm1995}
W.~Ludwig.
\newblock A {{Subexponential Randomized Algorithm}} for the {{Simple Stochastic Game Problem}}.
\newblock {\em Information and Computation}, 117(1):151--155, February 1995.
\newblock \href {https://doi.org/10.1006/inco.1995.1035} {\path{doi:10.1006/inco.1995.1035}}.

\bibitem{priezzhevEulerianWalkersModel1996}
V.~B. Priezzhev, Deepak Dhar, Abhishek Dhar, and Supriya Krishnamurthy.
\newblock Eulerian {{Walkers}} as a {{Model}} of {{Self-Organized Criticality}}.
\newblock {\em Physical Review Letters}, 77(25):5079--5082, December 1996.
\newblock \href {https://doi.org/10.1103/PhysRevLett.77.5079} {\path{doi:10.1103/PhysRevLett.77.5079}}.

\bibitem{sikorskiEllipsoidAlgorithmComputation1993}
K.~Sikorski, C.W. Tsay, and H.~Wo{\'z}niakowski.
\newblock An {{Ellipsoid Algorithm}} for the {{Computation}} of {{Fixed Points}}.
\newblock {\em Journal of Complexity}, 9(1):181--200, March 1993.
\newblock \href {https://doi.org/10.1006/jcom.1993.1013} {\path{doi:10.1006/jcom.1993.1013}}.

\bibitem{sikorskiComputationalComplexityFixed2009}
Krzysztof Sikorski.
\newblock Computational complexity of fixed points.
\newblock {\em Journal of Fixed Point Theory and Applications}, 6(2):249--283, December 2009.
\newblock \href {https://doi.org/10.1007/s11784-009-0128-3} {\path{doi:10.1007/s11784-009-0128-3}}.

\bibitem{websterStochasticArrivalProblem2022}
Thomas Webster.
\newblock The {{Stochastic Arrival Problem}}.
\newblock In {\em Reachability {{Problems}}}, pages 93--107, Cham, 2022. Springer International Publishing.
\newblock \href {https://doi.org/10.1007/978-3-031-19135-0_7} {\path{doi:10.1007/978-3-031-19135-0_7}}.

\bibitem{websterRecursiveArrivalProblem2023}
Thomas Webster.
\newblock The {{Recursive Arrival Problem}}.
\newblock {\em Electronic Proceedings in Theoretical Computer Science}, 390:168--184, September 2023.
\newblock \href {https://arxiv.org/abs/2310.01004} {\path{arXiv:2310.01004}}, \href {https://doi.org/10.4204/EPTCS.390.11} {\path{doi:10.4204/EPTCS.390.11}}.

\end{thebibliography}

\end{document}